\documentclass[a4paper,USenglish,numberwithinsect]{article}

\usepackage[utf8]{inputenc}
\usepackage{fullpage}
\usepackage{microtype}
\usepackage{amsfonts}
\usepackage{amssymb}
\usepackage{amsmath}
\usepackage{amsthm}
\usepackage{paralist}
\usepackage{graphicx}
\usepackage{wrapfig}
\usepackage{color}
\usepackage{verbatim}
\usepackage{xspace}
\usepackage{chngcntr}
\usepackage{authblk}
\usepackage[unicode]{hyperref}

\hypersetup{
   pdftitle={Online Packet Scheduling with Bounded Delay and Lookahead},
   pdfauthor={M. Böhm, M. Chrobak, Ł. Jeż, F. Li, J. Sgall, and P. Veselý}
}

\newcommand{\ToggleH}{\textsf{ToggleH}}
\newcommand{\myparagraph}[1]{{\smallskip\noindent{\bf #1}}}
\newcommand{\emparagraph}[1]{{\smallskip\noindent\emph{#1}}}
\newcommand{\etal}{{\em et al.\/}}

\newcommand{\mycase}[1]{\mbox{{\underline{Case #1}}:\/}}

\newcommand{\bart}{{\bar t}}

\newcommand{\EDF}{{\textrm{EDF}}}
\newcommand{\BDPS}{{\textsf{PacketScheduling}}}
\newcommand{\OPT}{$\vari{OPT}$\xspace}


\let\epsilon=\varepsilon
\def\I{\it\aftergroup\/}

\def\text#1{\hbox{#1}}

\def\OPT{$\textsf{OPT}$\xspace}
\def\ALG{$\textsf{ALG}$\xspace}
\def\LC{{\sc CompareWithBias}\xspace}

\newcommand\LCalpha{{\sc CompareWithBias}$(\alpha)$\xspace}

\newcommand{\thinnegspace}{{\hspace{-0.02in}}}
\newcommand{\scdot}{{\thinnegspace\cdot\thinnegspace}}

\newcommand{\braced}[1]{{ \left\{ #1 \right\} }}

\newcommand{\hatR}{\widehat{R}}

\newcommand{\calA}{{\cal A}}

\newcommand{\half}{{\textstyle\frac{1}{2}}}
\newcommand{\onehalf}{{\textstyle\frac{1}{2}}}

\newcommand{\threehalves}{{\textstyle\frac{3}{2}}}

\newcommand{\onefourth}{{\textstyle\frac{1}{4}}}

\newcommand{\onesixth}{{\textstyle\frac{1}{6}}}

\newcommand{\ignore}[1]{}

\newtheorem{theorem}{Theorem}[section]
\newtheorem{lemma}[theorem]{Lemma}


\title{Online Packet Scheduling with Bounded Delay and Lookahead%
\footnote{M. B{\"o}hm,  J. Sgall, and P. Vesel\'{y} were supported by
project 14-10003S of GA \v{C}R and by the GAUK project 548214.
M.~Chrobak was supported by NSF grants CCF-1217314 and CCF-1536026.
{\L}.~Je{\.z} was supported by NCN grant DEC-2013/09/B/ST6/01538.
F. Li was supported by NSF grant CCF-1216993.}
}

\author[1]{Martin Böhm}
\affil[1]{
Computer Science Institute of Charles University,
Prague, Czech Republic.
\texttt{\{bohm,sgall,vesely\}@iuuk.mff.cuni.cz}
}

\author[2]{Marek Chrobak}
\affil[2]{
Department of Computer Science and Engineering,
University of California, Riverside, USA.
\texttt{marek@cs.ucr.edu}.
}

\author[3]{Łukasz Jeż}
\affil[3]{
Institute of Computer Science,
University of Wrocław, Poland.
\texttt{lje@cs.uni.wroc.pl}.
}

\author[4]{Fei Li}
\affil[4]{
Department of Computer Science,
George Mason University, USA.
\texttt{lifei@cs.gmu.edu}.
}

\author[1]{Jiří Sgall}

\author[1]{Pavel Veselý}

\begin{document}

\maketitle

\begin{abstract}
We study the \emph{online bounded-delay packet scheduling problem (\BDPS)}, where 
packets of unit size arrive at a router over time and need to be transmitted 
over a network link. Each packet has two attributes: 
a non-negative weight and a deadline for its transmission.
The objective is to maximize the total weight of the transmitted packets.
This problem has been well studied in the literature, yet its
optimal competitive ratio remains unknown: the best upper bound is 
$1.828$~\cite{englert_suppressed_packets_07}, still quite far from the best lower bound of  
$\phi \approx 1.618$~\cite{hajek_unit_packets_01,andelman_queueing_policies_03,chin_partial_job_values_03}.

In the variant of {\BDPS} with \emph{$s$-bounded instances}, each packet can be
scheduled in at most $s$ consecutive slots, starting at its release time.
The lower bound of $\phi$ applies even to the special case of $2$-bounded
instances, and a $\phi$-competitive algorithm for $3$-bounded
instances was given in~\cite{chin_weighted_throughput_06}.
Improving that result, and addressing a question posed by Goldwasser~\cite{goldwasser_survey_10},
we present a $\phi$-competitive algorithm for \emph{$4$-bounded} instances.

We also study a variant of {\BDPS} where an online
algorithm has the additional power of \emph{1-lookahead}, knowing at
time $t$ which packets will arrive at time $t+1$. For {\BDPS} with 1-lookahead 
restricted to $2$-bounded instances, we present an online
algorithm with competitive ratio $\half (\sqrt{13} - 1) \approx 1.303$ 
and we prove a nearly tight lower bound of $\onefourth (1 + \sqrt{17}) \approx 1.281$.
\end{abstract}


\section{Introduction}\label{sec:intro}



\myparagraph{Background.}
Optimizing the flow of packets across an IP network gives rise to a plethora of
challenging algorithmic problems. In fact, even  
scheduling packet transmissions from a router across a specific network 
link can involve non-trivial tradeoffs.
Several models for such tradeoffs have been formulated, depending on the
architecture of the router, on characteristics of the packets,
and on the objective function.

In the model that we study in this paper, each packet has two attributes: 
a non-negative weight and a deadline for its transmission. The time is assumed to
be discrete (slotted), and only one packet can be sent in each slot.
The objective is to maximize the total weight of the transmitted packets. 
We focus on the online setting, where at each time step the router needs to choose
a pending packet for transmission, without the knowledge about future packet arrivals.
This problem, which we call \emph{online bounded-delay packet scheduling problem} ({\BDPS}),
was introduced by Kesselman~\etal~\cite{kesselman_buffer_overflow_04} as a 
theoretical abstraction that captures the constraints and objectives of packet
scheduling in networks that need to provide quality of service (QoS) guarantees.
The combination of deadlines and weights is used to model packet priorities.

In the literature, the {\BDPS} problem is sometimes referred to as 
\emph{bounded-delay buffer management in QoS switches}. It can also be formulated
as the job-scheduling problem $1|p_j = 1,r_j|\sum w_jU_j$, 
where packets are represented by unit-length
jobs with deadlines, with the objective to maximize the weighted throughput.

A router transmitting packets across a link needs to make scheduling decisions 
on the fly, based only on the currently available information. This motivates the
study of online competitive algorithms for {\BDPS}.
A simple online greedy algorithm that always schedules the heaviest pending packet is
known to be $2$-competitive~\cite{hajek_unit_packets_01,kesselman_buffer_overflow_04}. 
In a sequence of papers~\cite{chrobak_improved_buffer_04,englert_buffer_management_09,li_optimal_agreeable_05,englert_suppressed_packets_07},
this ratio was gradually improved, and the best
currently known ratio is $1.828$~\cite{englert_suppressed_packets_07}.
The best lower bound, widely believed to be the optimal ratio, is
$\phi = (1 + \sqrt{5}) / 2 \approx 1.618$~\cite{hajek_unit_packets_01,andelman_queueing_policies_03,chin_partial_job_values_03}.
Closing the gap between these two bounds is one of the most intriguing open problems in
online scheduling.


\myparagraph{$s$-Bounded instances.}
In an attempt to bridge this gap, restricted models 
%
%
have been studied.
In the \emph{$s$-bounded} variant of {\BDPS}, each packet must be scheduled
within $k$ consecutive slots, starting at its
release time, for some $k\leq s$ possibly depending on the packet.
%
%
The lower bound of $\phi$ from~\cite{hajek_unit_packets_01,andelman_queueing_policies_03,chin_partial_job_values_03}
holds even in the $2$-bounded case.
A matching $\phi$-competitive algorithm was given 
Kesselman~{\etal}~\cite{kesselman_buffer_overflow_04} for $2$-bounded
instances and by Chin~{\etal}~\cite{chin_weighted_throughput_06}
for $3$-bounded instances. 
Both results are based on the algorithm $\EDF_\alpha$, with $\alpha=\phi$, which always schedules the earliest-deadline
packet whose weight is at least the weight of the heaviest pending packet divided by $\alpha$
(ties are broken in favor of heavier packets). $\EDF_\phi$ is not $\phi$-competitive
for $4$-bounded instances; however, a different choice of $\alpha$
yields a $1.732$-competitive algorithm for the $4$-bounded case~\cite{chin_weighted_throughput_06}.

\emparagraph{Our contribution.} We present a
$\phi$-competitive online algorithm for {\BDPS} restricted to $4$-bounded instances, matching the
lower bound of $\phi$ (see Section~\ref{sec:4bounded}).
This improves the results from~\cite{chin_weighted_throughput_06}
and answers the question posed by Goldwasser
in his SIGACT~News survey~\cite{goldwasser_survey_10}. 


\myparagraph{Algorithms with 1-lookahead.}
In Sections~\ref{sec:lookaheadalgo} and \ref{sec:lookaheadlb}, we
investigate a variant of {\BDPS} where an
online algorithm is able to learn at time $t$ which packets will
arrive by time $t+1$. This property is known as
\emph{1-lookahead}. From a practical point of view, 1-lookahead corresponds
to the situation in which a router can see the packets that are just
arriving to the buffer and that will be available for transmission in the next time slot.

The notion of lookahead is quite natural and it has appeared in the online algorithm
literature for paging~\cite{albers_lookahead_97}, scheduling~\cite{motwani_lookahead_assembly_lines_98}
and bin packing~\cite{grove_bin_packing_lookahead_95} since the 1990s.
Ours is the first paper, to our knowledge, that considers 
lookahead in the context of packet scheduling.

\emparagraph{Our contributions.}
We provide two results about {\BDPS} with 1-lookahead, restricted to $2$-bounded instances.
First, in Section~\ref{sec:lookaheadalgo}, we present
an online algorithm for this problem with competitive ratio of $\half (\sqrt{13} - 1) \approx 1.303$.
Then, in Section~\ref{sec:lookaheadlb}, we give a lower bound of $\onefourth(1 +
\sqrt{17}) \approx 1.281$ on the competitive ratio of algorithms with 1-lookahead
which holds already for the $2$-bounded case.


\section{Definitions and Notation}\label{sec:definitions}



\myparagraph{Problem statement.}
Formally, we define the {\BDPS} problem as follows.
The instance is a set of packets, with each packet $p$ specified by a
triple $(r_p,d_p,w_p)$, where $r_p$ and $d_p\ge r_p$ are integers representing
the \emph{release time} and \emph{deadline} of $p$, and $w_p\ge 0$ is a real number
representing the \emph{weight} of $p$. Time is discrete, divided into
unit \emph{time slots}, also called \emph{steps}.
A \emph{schedule} assigns time slots to some subset of packets such that
(i) any packet $p$ in this subset is assigned a slot in the interval $[r_p,d_p]$, 
and (ii) each slot is assigned to at most one packet.
The objective is to compute a schedule that maximizes the total weight of the
scheduled packets, also called the {\em profit}. 

In the \emph{$s$-bounded} variant of {\BDPS}, we assume that each
packet $p$ in the instance satisfies $d_p\le r_p+s-1$. In other words,
this packet must be scheduled within $k_p$ consecutive slots, starting
at its release time, for some $k_p\leq s$.

%
%


\myparagraph{Online algorithms.}
In the online variant of {\BDPS}, which is the focus of our work, at any time $t$
only the packets released at times up to $t$ are revealed. Thus an online
algorithm needs to decide which packet to schedule at time $t$ (if any)
without any knowledge of packets released after time $t$.

As is common in the area of online optimization, we measure the
performance of an online algorithm $\calA$ by its competitive
ratio. An algorithm is $R$-competitive if, for all instances, the
total weight of the optimal schedule (computed offline) is at most $R$
times the weight of the schedule computed by $\calA$.

We say that a packet is \emph{pending} for an algorithm at time $t$, 
if $r_p \le t \le d_p$ and $p$ is not scheduled before time $t$.
A (pending) packet $p$ is \emph{expiring} at time $t$ if $d_p = t$,
that is, it must be scheduled now or never.
A packet $p$ is \emph{tight} if $r_p = d_p$; thus $p$ is 
expiring already at its release time.


\myparagraph{Algorithms with 1-lookahead.}  
In Sections~\ref{sec:lookaheadalgo} and \ref{sec:lookaheadlb}, we
investigate the {\BDPS} problem \emph{with 1-lookahead}. With
1-lookahead, the problem definition changes so that at time $t$, an
online algorithm can also see the packets that will be released at
time $t+1$, in addition to the pending packets.  Naturally, only a
pending packet can be scheduled at time $t$.


\myparagraph{Other terminology and assumptions.}
We will make several assumptions about our problem that do not affect the
generality of our results. First, we can assume that all packets have different
weights. Any instance can be transformed into an instance with distinct weights
through infinitesimal perturbation of the weights, without affecting the
competitive ratio. Second, we assume that at each step there is at least one
pending packet. (If not, we can always release a tight packet of
weight $0$ at each step.)

We define the \emph{earliest-deadline relation} on packets, or \emph{canonical
ordering}, denoted $\prec$, where $x\prec y$ means that either $d_x <
d_y$ or $d_x = d_y$ and $w_x > w_y$ (so the ties are broken in favor
of heavier packets).  At any step $t$, the algorithm maintains the
earliest-deadline relation on the set of its pending packets.
Throughout the paper, ``earliest-deadline packet'' means 
the earliest packet in the canonical ordering.

Regarding the adversary (optimal) schedule, we can assume that it
satisfies the following \emph{earliest-deadline property}: if packets $p$, $p'$  are
scheduled in steps $t$ and $t'$, respectively, where 
$r_{p'} \le t  < t' \le d_p$  (that is, $p$ and $p'$
can be swapped in the schedule without violating their
release times and deadlines), then $p\prec p'$. This can be rephrased in
the following useful way: at any step, the optimum schedule transmits
the earliest-deadline packet among all the pending packets
that it transmits in the future.


\section{An Algorithm for 4-bounded Instances}\label{sec:4bounded}



In this section, we present a $\phi$-competitive algorithm for
$4$-bounded instances.  Ratio $\phi$ is of course optimal~\cite[see
  also
  Section~\ref{sec:intro}]{hajek_unit_packets_01,andelman_queueing_policies_03,chin_partial_job_values_03}.
Up until now, the best competitive ratio for $4$-bounded instances was
$\sqrt{3}\approx 1.732$, achieved by algorithm $\EDF_{\sqrt{3}}$ in
\cite{chin_weighted_throughput_06}. Our algorithm can be seen as a
modification of $\EDF_\phi$, which under certain conditions schedules
a packet lighter than $w_h/\phi$ where $h$ is the heaviest pending
packet.

We remark that our algorithm uses memory; in particular, it marks one
pending packet under certain conditions. It is an interesting question
whether there is a memoryless $\phi$-competitive algorithm for $4$-bounded
instances.


\myparagraph{Algorithm~$\ToggleH$.}
The algorithm maintains one mark that may be assigned to one of the
pending packets. For a given step $t$, we choose the following packets from
among all pending packets:
\begin{description}
	\setlength\itemsep{-1pt}
	\item{$h=$} the heaviest packet,
	\item{$s=$} the second-heaviest packet,
	\item{$f=$} the earliest-deadline packet with $w_f\ge w_h/\phi$, and
	\item{$e=$} the earliest-deadline packet with $w_e \ge w_h/\phi^2$.
\end{description}

We then proceed as follows:
\begin{tabbing}
aaa \= aaa \= aaa \= aaa \= aaa \= aaa \= \kill
\> \textbf{if} ($h$ is not marked) $\vee$  ($w_s \ge w_h/\phi$) $\vee$ ($d_e > t$)
\\
\>\> schedule $f$
\\
\> \> \textbf{if} there is a marked packet \textbf{then} unmark it
\\
\>\> \textbf{if} ($d_h =t+3$) $\wedge$ ($d_f = t+2$) \textbf{then} mark $h$
\\
\> \textbf{else} // ($h$ is marked) $\wedge$ ($w_s < w_h/\phi$) $\wedge$ ($d_e = t$)
\\
\>\> schedule $e$
\\
\> \>  unmark $h$
\end{tabbing}
Note that when $f\neq h$, then the algorithm will always schedule $f$. This is because
in this case $f$ is a candidate for $s$, so the condition $w_s\ge w_h/\phi$ holds.
The algorithm never specifically chooses $s$ for scheduling -- it is only
used to determine if there is one more relatively heavy pending packet other than $h$. (But
 $s$ \emph{may} get scheduled if it so happens that $s =f$ or
$s = e$.) 
Note also that, if $e\neq f$, then $e$ is scheduled only in a very
specific scenario, when all of the following hold:
$e$ is expiring, $h$ is marked, and $w_s < w_h/\phi$.
%


\myparagraph{Intuition.} 
Let us give a high-level view of the analysis using charging schemes
and an example that motivates both our algorithm and its analysis.
The example consists of four packets $j,k,f,h$ released in step $1$,
with deadlines $1,2,3,4$ and weights
$1-\varepsilon,1-\varepsilon,1,\phi$ for a small $\varepsilon>0$,
respectively. The optimum schedules all packets.

Algorithm $\EDF_\phi$ performs only $f$-steps; in our example it
schedules $f$ and $h$ in steps $1$ and $2$, while $j$ and $k$ are
lost. Thus the ratio is larger than $\phi$. (In fact, after optimizing
the threshold and the weight of $h$, this is the tight example for
$\EDF_{\sqrt{3}}$ on 4-bounded instances.)  $\ToggleH$ avoids this
example by performing $e$-step in step $2$ and scheduling $k$ which
has the role of $e$ and $s$ in the algorithm.

This example and its variants are also important for our analysis. We
analyze the algorithms by charging schemes, where the weight of each
packet scheduled by the adversary is charged to one or more of the
slots of the algorithm's schedule. If the weight charged to each slot
is at most $R$ times the weight of the packet scheduled by the
algorithm in that slot, the algorithm is $R$-competitive. In the case
of $\EDF$, we charge the weight of each packet $j$ scheduled by the
adversary at time $t$ either fully to the step where $\EDF$ schedules
$j$, if it is before $t$, or fully to step $t$ otherwise. In our
example, the weight charged to step $1$ is $2-\varepsilon$ while
$\EDF$ schedules only weight $1$, giving the ratio $2$. Considering
steps $1$ and $2$ together leads to a better ratio and after balancing
the threshold it gives the tight analysis of $\EDF_{\sqrt{3}}$.

Our analysis of $\ToggleH$ is driven by the variants of the example
above where step~$2$ is an $f$-step. This may happen in several
cases. One case is if in step~$2$ another packet $s$ with $w_s\geq
w_h/\phi$ arrives. If $s$ is not scheduled in step $2$, then $s$ is
pending in step~$3$, thus $\ToggleH$ schedules a relatively heavy
packet in step~$3$, and we can charge a part of the weight of $f$,
scheduled in step $3$ by the adversary, to step~$3$.
This motivates the definition of regular up and back charges
below and corresponds to Case~5.1 in the analysis.  Another case is
when the weight of $k$ is changed to $1/\phi-\varepsilon$. Then $\ToggleH$
performs an $f$-step because $k$ is not a candidate for $e$, thus the
role of $e$ is taken by the non-expiring packet $h$. However, then the
weight of the four packets charged to steps $1$ and $2$ in the way
described above is at most $\phi$ times the weight of $f$ and $h$;
this corresponds to Case~5.2 of the analysis. Lemma~\ref{lem:aux}
gives a subtle argument showing that in the 4-bounded case essentially
these two variants of our example are the only difficult
situations. Finally, in the original example, $\ToggleH$ schedules $k$
in step $2$ which is an $e$-step. Then again $h$ is a pending heavy
packet and we can charge some weight of $f$ to step $3$. Intuitively
it is important that an $e$-step is performed only in a very specific
situation where it is guaranteed that $h$ can be scheduled in the next
two steps (as it is marked) and that there is no other packet of
comparable weight due to the condition $w_s<w_h/\phi$. Still, there is
a case to be handled: If more packets arrive in step $3$, it is also
possible that the adversary schedules $h$ already in step $2$ and we
need to redistribute its weight. This case motivates the definition of
the special up and back charges below.

\begin{theorem}\label{thm:toggle-h}
	Algorithm~$\ToggleH$ is $\phi$-competitive on $4$-bounded instances.
\end{theorem}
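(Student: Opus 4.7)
The plan is to prove the competitive ratio by a charging scheme that assigns the weight of each packet scheduled by the adversary to one or more slots of $\ToggleH$, and then to verify that every slot $t$ receives total charge at most $\phi \cdot w_{\ToggleH(t)}$, where $\ToggleH(t)$ denotes the packet that $\ToggleH$ transmits at time $t$. Without loss of generality we may assume, using the earliest-deadline property, that the adversary schedule is canonical. We also classify steps of $\ToggleH$ into $f$-steps (the first branch of the algorithm) and $e$-steps (the second branch, which occurs only when $h$ is marked, $w_s < w_h/\phi$, and $d_e=t$).

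First I would set up the charging rules. For any packet $p$ scheduled by the adversary at step $t^{\mathrm{OPT}}_p$: if $\ToggleH$ also schedules $p$, charge $w_p$ to that slot (a \emph{direct charge}); otherwise we use one of four redistribution rules. A \emph{regular up-charge} sends weight from an early adversary slot forward to a later $f$-step of $\ToggleH$ that schedules a heavier packet, capturing the case when $\ToggleH$ defers a packet. A \emph{regular back-charge} sends weight from a late adversary slot backward to an earlier $f$-step that schedules the same or a heavier packet. The \emph{special up-charge} and \emph{special back-charge} handle the exceptional situation of an $e$-step: because $h$ is marked in that case, the algorithm is committed to scheduling $h$ within the next two steps, so weight of the adversary's use of $h$ (or of a packet occupying $h$'s adversary slot) can be redirected to the forthcoming $h$-step of $\ToggleH$, while the weight of $e$ is absorbed by the current slot. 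I would give a precise recipe for each charge, including the fractional split needed to make Case~5.2 tight at ratio exactly $\phi$.

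Next I would verify per-slot accounting by case analysis on the step type and on the relationship $d_h$, $d_f$, $d_e$, $w_s$ force among packets. In every $f$-step where $f=h$ the analysis is straightforward (the scheduled packet is the heaviest). In an $f$-step where $f\ne h$, the key inequalities $w_f\ge w_h/\phi$ and $w_s\ge w_h/\phi$ (forced by $f$ being a candidate for $s$) combined with the identity $1+1/\phi=\phi$ bound the total charge by $\phi w_f$. For $e$-steps, the condition $w_s<w_h/\phi$ limits how many heavy packets the adversary can fit into the window, and the mark on $h$ (set only when $d_h=t+3$ and $d_f=t+2$ at the step that placed the mark) together with $4$-boundedness ensures that $h$ remains pending and is scheduled within the next two steps, so the special back-charge lands on a slot of sufficient weight. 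Here I would invoke an auxiliary lemma (in the spirit of Lemma~\ref{lem:aux}) stating that in the $4$-bounded regime the configurations of pending and adversary packets around a step that creates a mark are very restricted, so the only genuinely tight cases are the two variants of the four-packet example of the \emph{Intuition} paragraph.

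The main obstacle I anticipate is getting the two borderline cases (Case~5.1 with a heavy $s$ arriving one step later, and Case~5.2 with $k$ of weight just below $w_h/\phi^2$) to balance against the charges that can accumulate on a single $f$-step when that slot simultaneously receives a regular up-charge from the previous step and a regular back-charge from the next step. Controlling this double-loading requires the mark bookkeeping to be leveraged globally: I would argue that a slot can receive a special charge from at most one marked $h$ (since the mark is cleared as soon as $h$ is scheduled or the conditions fail), and that the up/back charges to any $f$-step $t$ are already limited by the packet $s$ defined at step $t$. Combining these restrictions with Lemma~\ref{lem:aux} and the algebraic identity $\phi^2=\phi+1$ should yield the bound $\phi \cdot w_{\ToggleH(t)}$ at every slot, completing the proof.
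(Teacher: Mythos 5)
Your outline correctly identifies the overall strategy (a per-slot charging scheme, the $f$-step/$e$-step distinction, an auxiliary lemma that pins down the geometry around a marking step, and the identity $\phi^2=\phi+1$ for balancing), but the crucial content---the exact amounts and \emph{directions} of the charges---is either unspecified or points the wrong way, and those details are where the proof actually lives. In the paper's scheme every non-trivial charge is either vertical (a ``full up'' charge of $w_j$ to the adversary's own slot $t$ when $j$ is still pending for $\ToggleH$) or split between a vertical piece ($w_h/\phi^2$ to slot $t$, with $h$ the heaviest packet pending for $\ToggleH$ at $t$) and a \emph{backward} piece ($w_j-w_h/\phi^2$ to the earlier slot where $\ToggleH$ transmitted $j$). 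There are no forward charges at all, and the short Lemma~\ref{l:up} (``if an $f$-step receives a regular back charge, its up charge is less than $w_h/\phi$'') together with the $1+1/\phi=\phi$ identity is what prevents the double-loading you flag as the main obstacle. Your proposed ``regular up-charge sends weight \ldots forward to a later $f$-step'' is a genuinely different accounting, and you give no reason it closes: a forward charge must be received by a slot whose existence and weight you control, which here is exactly what is \emph{not} guaranteed since $\ToggleH$ might never schedule the deferred packet.

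The description of the special ($e$-step) charges has the same problem, and it is the sharper gap. You propose redirecting the weight of the adversary's $h$ to ``the forthcoming $h$-step of $\ToggleH$.'' But $\ToggleH$ may never schedule $h$: after the $e$-step, heavier packets can arrive and displace $h$ before its deadline. The paper instead sends the special back charge $w_h/\phi^2$ \emph{backward} to step $t-1$. That slot is known to be safe, because the only way $h$ is marked at the start of an $e$-step $t$ is if step $t-1$ was an $f$-step that set the mark with $d_f=t+1$, hence $w_f\ge w_h/\phi$, so that slot can absorb the extra $w_h/\phi^2$ while its own up charge is at most $w_f$ (proved in Case~3). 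Without that backward orientation, and without specifying the fractional split $w_h/\phi$ versus $w_h/\phi^2$ that makes the $e$-step's own slot (carrying $e$ with $w_e\ge w_h/\phi^2$) balance to exactly $\phi w_e$, the cases do not close. Finally, your auxiliary lemma is gestured at but not stated: the paper's Lemma~\ref{lem:aux} establishes the specific facts ($\ToggleH$ schedules $h$ at $t+1$, step $t+1$ receives no special charge and an up charge at most $w_h/\phi^2$) that allow Case~5.2 to be handled by amortizing over two consecutive slots, and that pair-of-slots amortization is itself an extra idea your outline omits (you claim per-slot bounds ``at every slot,'' which is not what the paper achieves in Case~5.2).
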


\begin{proof}
Fix some optimal adversary schedule. Without loss of generality, we
can assume that this schedule satisfies the earliest-deadline property
(see Section~\ref{sec:definitions}).

We have two types of packets scheduled by Algorithm~$\ToggleH$:
\emph{f-packets}, scheduled using the first case, and
\emph{e-packets}, scheduled using the second case. Similarly, we refer
to the steps as \emph{$f$-steps} and \emph{$e$-steps}.

Let $t$ be the current step. By $h$, $f$, $e$, and $s$ we denote the
packets from the definition of $\ToggleH$. By $j$ we denote the packet
scheduled by the adversary. By $h'$ and $h''$ we denote the heaviest
pending packets in steps $t+1$ and $t+2$, respectively. We use the
same convention for packets $f$, $e$, $s$, and $j$.

Our analysis uses a new charging scheme which we now define. The
adversary packet $j$ scheduled in step $t$ is charged according to the
first case below that applies:

\begin{enumerate}
	
\item 
If $t$ is an $e$-step and $j=h$, we charge $w_h/\phi$ to step $t$
and $w_h/\phi^2$ to step $t-1$. We call these charges a
\emph{special up charge} and a \emph{special back charge},
respectively. Note that the total charge is equal to $w_h=w_j$.

\item 
If $j$ is pending for $\ToggleH$ in step $t$, charge $w_j$ to
step $t$.  We call this charge a \emph{full up charge}.

\item 
Otherwise $j$ is scheduled before step $t$. We charge
$w_h/\phi^2$ to step $t$ and $w_j-w_h/\phi^2$ to the step where
$\ToggleH$ scheduled $j$.  We call these charges a \emph{regular up
  charge} and a \emph{regular back charge}, respectively. We point out that
the regular back charge may be negative, but this causes no problems in
the proof. 

\end{enumerate}

We start with an easy observation that we use several times throughout the proof.


\begin{lemma}\label{l:up}
If an $f$-step $t$ receives a regular back charge, then the up charge it
receives is less than $w_h/\phi$.
\end{lemma}

\begin{proof}
For a regular up charge the lemma is trivial (with a slack of a factor
of $\phi$). For a full up charge, the existence of a back charge
implies that the adversary schedules $f$ after $j$, thus the
earliest-deadline property of the adversary schedule implies that
$j\prec f$, as both $j$ and $f$ are pending for the adversary at
$t$. Thus $\ToggleH$ would schedule $j$ if $w_j\geq
w_h/\phi$. Finally, an $f$-step does not receive a special up charge.
\end{proof}

We examine packets scheduled by $\ToggleH$ from left to right, that is
in order of time.  For each time step $t$, if $p$ is the packet scheduled at time $t$, we
want to show that the charge to step $t$ is at most $\phi w_p$.
However, as it turns out, this will not always be true. In one
case we will also consider the next step $t+1$ and the packet $p'$ scheduled in step $t+1$,
and show that the total charge to steps $t$ and $t+1$ is at most $\phi (w_p+w_{p'})$. 

Let $t$ be the current step. We consider several cases.


\smallskip
\noindent
\mycase{1} $t$ is an $e$-step. 
By the definition of $\ToggleH$, $w_e\geq w_h/\phi^2$ and $d_e=t$;
the latter implies that step $t$ receives no regular back charge.
We further note that the heaviest pending packet $h'$ in step $t+1$ is either 
released at time $t+1$ or it coincides with $h$, which is still pending and 
became unmarked by the algorithm in step $t$; in either case $h'$ is unmarked 
at the beginning of step $t+1$, which implies that step $t+1$ is an $f$-step.
Thus, step $t$ receives no special back charge, which, combined with the 
previous observation, implies it receives no back charge of any kind.

Now we claim that the up charge is at most $w_h/\phi$. For a special
or regular up charge this follows from its definition. For a full up
charge, the job $j$ is pending at time $t$ for $\ToggleH$ and $j\neq
h$ (as for $j=h$ the special charges are used). This implies
that $w_j<w_h/\phi$, as otherwise $w_s\geq w_h/\phi$ and $t$ would be
an $f$-step. Thus the full charge is $w_j\leq w_h/\phi$ as well.

Using $w_e\geq w_h/\phi^2$, the charge is at most $w_h/\phi\leq\phi
w_e$ and we are done.


\smallskip
\noindent
\mycase{2} $t$ is an $f$-step and $t$ does not receive a back charge. 
Then $t$ can only receive an up-charge, and this
up charge is at most $w_h\leq \phi w_f$, where the inequality follows from the
definition of $f$.


\smallskip
\noindent
\mycase{3} $t$ is an $f$-step and $t$ receives a special back charge. 
From the definition of special charges, the next step is an $e$-step,
and therefore $h'$ is marked at its beginning.  Since the only
packet that may be marked after an $f$-step is $h$, we thus have
$h=h'=j'$, and the special back charge is $w_h/\phi^2$.
Since $f\prec h$, the adversary cannot schedule $f$ after step $t$, so
step $t$ cannot receive a regular back charge.

We claim that the up charge to step $t$ is at most $w_f$.  Indeed,
a regular up charge is at most $w_h/\phi^2 \leq w_f$, and a special
up charge does not happen in an $f$-step. To show this bound for
a full up charge, assume for contradiction that
$w_j>w_f$. This implies that $j\neq f$ and, since
$\ToggleH$ scheduled $f$, we have $d_j>d_f$. In particular $j$ is
pending at time $t+1$. Thus 
$w_{s'}\ge w_j > w_f \ge w_h/\phi$, contradicting the fact that
$t+1$ is an $e$-step. Therefore the full charge is $w_j\le w_f$, as claimed.

As $w_h\leq \phi w_f$, the total charge to $t$ is at most
$w_f+w_h/\phi^2\leq w_f+w_f/\phi=\phi w_f$. 


\smallskip
\noindent
\mycase{4} $t$ is an $f$-step, $t$ receives a regular back charge and no
special back charge, and
$f=h$. The up charge is at most $w_h/\phi$ by Lemma~\ref{l:up} and the back
charge is at most $w_h$, thus
the total charge is at most $w_h+w_h/\phi=\phi w_h$, and we are done.


\smallskip
\noindent
\mycase{5} $t$ is an $f$-step, $t$ receives a regular back charge and no
special back charge, and
$f\neq h$. Let $\bart$ be the step when the adversary schedules
$f$. We distinguish two sub-cases.


\smallskip
\noindent
\mycase{5.1} In step $\bart$, a packet of weight at least $w_h/\phi$
is pending for the algorithm. 
Then the regular back charge to $t$ is at most 
$w_f-(w_h/\phi)/\phi^2=w_f-w_h/\phi^3$.  As the up charge to $t$ is at most
$w_h/\phi$ by Lemma~\ref{l:up}, the total charge to $t$ is at most
$w_h/\phi+w_f-w_h/\phi^3= w_f+w_h/\phi^2\leq (1+1/\phi) w_f=\phi w_f$,
and we are done.


\smallskip
\noindent
\mycase{5.2} In step $\bart$, no packet of weight at least
$w_h/\phi$ is pending for the algorithm.
In this case we consider the charges to steps $t$ and $t+1$
together. First, we claim the following.

\begin{figure}
\centering
\includegraphics[width=7cm]{}
\caption{An illustration of the situation in Case~5.2. Up charges are denoted by solid arrows
and back charges by dashed arrows.}
\label{fig:ToggleH-case5.2}
\end{figure}

\begin{lemma}\label{lem:aux}
$\ToggleH$ schedules $h$ in step $t+1$. Furthermore, step $t+1$
receives no special charge and it receives an up charge of at most
$w_h/\phi^2$. 
\end{lemma}
\begin{proof}
Since $f\neq h$, we have $f\prec h$ and thus, using also the
definition of $\bart$ and 4-boundedness, $\bart\leq d_f<d_h\leq
t+3$. The case condition implies that $h$ is not pending at $\bart$,
thus $\ToggleH$ schedules $h$ before $\bart$. The only
possibility is that $\ToggleH$ schedules $h$ in step $t+1$,
$\bart=d_f=t+2$, and $d_h=t+3$; see Figure~\ref{fig:ToggleH-case5.2}
for an illustration. This also implies that $\ToggleH$
marks $h$ in step $t$. 

We claim that $w_{s'}<w_h/\phi$. Indeed, otherwise either $s'$ is
pending in step $t+2$, contradicting the condition of Case 5.2, or
$d_{s'}=t+1<d_h$, thus $s'$ is a better candidate for $f'$ than $h$,
which contradicts the fact that the algorithm scheduled
$f'=h$. 

The claim also implies that $h'=h$, as otherwise $w_{s'}\geq
w_h$. Since $h=h'$ is scheduled in step $t+1$, there is no
marked packet in step $t+2$ and $t+2$ is an $f$-step; thus there is
no special back charge to $t+1$. 

We note that step $t+1$ is also an $f$-step, since $\ToggleH$
schedules $h$ in step $t+1$ and $d_h>t+1$.  Since $h'=h$ is marked
when step $t+1$ starts and $w_{s'}< w_h/\phi$, the reason that step
$t+1$ is an $f$-step must be that $d_{e'}>t+1$.

There is no special up charge to step $t+1$ as it is an $f$-step.  If
the up charge to step $t+1$ is a regular up charge, by definition it
is at most $w_{h'}/\phi^2=w_h/\phi^2$ and the lemma holds.

The only remaining case is that of a full up charge to step $t+1$ from 
a packet $j'$ scheduled by the adversary in step $t+1$ and pending for
$\ToggleH$ in step $t+1$.
Since $j'\neq h$, it is a candidate for $s'$, and thus
$w_{j'}<w_h/\phi\leq w_f$.  The earliest-deadline property of the
adversary schedule implies that $j'\prec f$; together with $d_f=t+2$
and $w_{j'}<w_f$ this implies $d_{j'}=t+1$. Therefore
$w_{j'}<w_h/\phi^2$, as otherwise $j'$ is a candidate for $e'$, but we
have shown that $d_{e'}>t+1$. Thus the regular up charge is at most
$w_{j'}<w_h/\phi^2$ and the lemma holds also in the remaining case.
\end{proof}

By Lemma~\ref{lem:aux}, step $t+1$ receives no special charge and an
up charge of at most $w_h/\phi^2$ and $\ToggleH$ schedules $h$ in step
$t+1$. Step $t+1$ thus also receives a regular back charge of at most
$w_h$.  So the total charge to step $t+1$ is at most $w_h/\phi^2 + w_h
\le w_f/\phi + w_h$.  Moreover, using Lemma~\ref{l:up}, the total charge
to step $t$ is at most $w_h/\phi + w_f$.  Thus, the total charge to
these two steps is at most $(w_h/\phi + w_f) + (w_f/\phi + w_h) =
\phi(w_f + w_h)$, as $f$ and $h$ are the two packets scheduled by
$\ToggleH$.


\smallskip

In each case we have shown that a step or a pair of 
consecutive steps receive a total charge of at most $\phi$
times the weight of packets scheduled in these steps.
Thus $\ToggleH$ is $\phi$-competitive for the $4$-bounded case.
\end{proof}
%


\section{An Algorithm for 2-Bounded Instances with Lookahead}\label{sec:lookaheadalgo}



In this section, we present an algorithm for \emph{$2$-bounded} {\BDPS} \emph{with 1-lookahead}, 
as defined in Section~\ref{sec:definitions}.

Consider some online algorithm $\calA$. Recall that,
for a time step $t$, packets \emph{pending} for $\calA$ are those that are released at or
before time $t$ and have neither expired nor been scheduled by $\calA$ before time $t$.
\emph{Lookahead} packets at time $t$ are the packets with release time $t+1$.

For $\calA$, we define the \emph{plan} in step $t$ to be the optimal schedule
in the time interval $[t,\infty)$ that consists
of pending and lookahead packets at time $t$ and has the
earliest-deadline property.
For $2$-bounded instances, this plan will only use slots $t$, $t+1$ and $t+2$.
We will typically denote the packets in the plan scheduled
in these slots by $p_1,p_2,p_3$, respectively. The earliest-deadline
property then implies that if both $p_1$ and $p_2$ have release time $t$
and deadline $t+1$ then $p_1$ is heavier than $p_2$ and similarly for
$p_2$ and $p_3$.

\goodbreak

\myparagraph{Algorithm \LCalpha.}
Fix some parameter $\alpha > 1$. At any time step $t$, the algorithm proceeds as follows:
\begin{tabbing}
aaa \= aaa \= aaa \= aaa \= aaa \= aaa \= \kill
\> let $p_1,p_2,p_3$ be the plan at time $t$
\\
\> \textbf{if} $r_{p_2} = t$ \textbf{and} 
		$w_{p_1} <  \min(\, w_{p_2} \,,\,  w_{p_3} \,,\, \frac{1}{2\alpha} (w_{p_2} + w_{p_3}) \,)$
\\
\>\> \textbf{then} schedule $p_2$
\\
\> \textbf{else} schedule $p_1$
\end{tabbing}

Note that if the algorithm schedules $p_2$ then $p_1$ must be expiring,
for otherwise $w_{p_1} > w_{p_2}$ (by canonical ordering).
Also, the scheduled packet is at least as heavy
as the heaviest expiring packet $q$,
since clearly $w_{p_1}\ge w_q$ and the algorithm schedules $p_2$ only
if $w_{p_1}<w_{p_2}$.


\myparagraph{Analysis.}
We set the parameter $\alpha$ and constants $\delta$ and $R$
which we will use in the analysis so that they satisfy the following equalities:
\begin{align}
2-\delta - \frac{R + 2\delta - 1}{\alpha} = R \label{eq:forwardCh} \\ 
1 - 2\delta + 2\alpha\delta = R \label{eq:chainCharges} \\
1 + \frac{1}{2\alpha} = R \label{eq:splitCharges}
\end{align}
By solving these equations we get
$\alpha=\onefourth (\sqrt{13} + 3) \approx 1.651$, 
$\delta = \onesixth (5 - \sqrt{13}) \approx 0.232$,
and $R = \half(\sqrt{13} - 1) \approx 1.303$.

In this section we will prove the following theorem:

\begin{theorem}
The algorithm \LCalpha is $R$-competitive for packet scheduling
on 2-bounded instances for $R = \half(\sqrt{13} - 1) \approx 1.303$ if $\alpha = \onefourth(\sqrt{13} + 3) \approx 1.651$.
\end{theorem}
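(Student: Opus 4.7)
The plan is to exhibit an amortized charging scheme: comparing \LCalpha against an optimal adversary schedule satisfying the earliest-deadline property (see Section~\ref{sec:definitions}), I would distribute each adversary packet's weight among slots of \LCalpha's schedule so that the total charge received by any slot, or by a short group of consecutive slots, is at most $R$ times the weight of the packet that \LCalpha places there. As in the proof of Theorem~\ref{thm:toggle-h}, I would process steps left-to-right and perform a case analysis on \LCalpha's action at the current step $t$.

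The easier case is when \LCalpha schedules the plan's $p_1$. If the adversary's packet $j$ at step $t$ is still pending for \LCalpha, then the canonical ordering of the plan together with the adversary's earliest-deadline property forces $w_j$ to be comparable to $w_{p_1}$, so a direct up-charge suffices. If instead $j$ has already been scheduled by \LCalpha in some earlier step $t'$, then most of $w_j$ is back-charged to $t'$ and only a small residual of size $\tfrac{1}{2\alpha}w_{p_3}$ is charged to $t$; this is the ``split charge'' configuration that drives equation~\eqref{eq:splitCharges}.

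The more delicate case is the swap, when \LCalpha schedules $p_2$ and the expiring $p_1$ is discarded. The decision rule $w_{p_1}<\min\bigl(w_{p_2},w_{p_3},\tfrac{1}{2\alpha}(w_{p_2}+w_{p_3})\bigr)$ is tailored exactly so that, when the adversary collects the lost $p_1$, its weight can be evenly split between the two slots where \LCalpha will transmit $p_2$ (at $t$) and $p_3$ (at $t+2$), each absorbing a $\tfrac{1}{2\alpha}$-fraction and again fitting under equation~\eqref{eq:splitCharges}. The essential role of 1-lookahead is precisely that $p_3$ enters the plan at decision time, so the comparison $w_{p_2}+w_{p_3}$ is meaningful at the moment the swap is committed.

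The main obstacle will be handling chains of consecutive swaps, where a sequence of light $p_1$'s is discarded in succession and their weights can propagate forward through the schedule. For this I introduce a forward charge of fractional size governed by the auxiliary parameter $\delta$: a $\delta$-portion of the adversary's weight is carried one slot forward, while the complementary $(1-\delta)$-portion is absorbed locally or back-charged. Equation~\eqref{eq:forwardCh} then bounds the combination of a fresh direct charge and an inherited forward charge at a non-chain step, while equation~\eqref{eq:chainCharges} bounds the multiplicative amplification accumulated along a chain, exploiting that each swap satisfies $w_{p_1}<w_{p_2}/\alpha$ so that weights grow by a factor exceeding $\alpha$ along the chain. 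The three constraints solved jointly yield exactly the constants $R$, $\alpha$, $\delta$ of the statement; a case-by-case verification that every slot (or pair of adjacent slots) stays within the $R$-bound then establishes the competitive ratio.
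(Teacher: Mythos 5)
Your broad framework is right---a left-to-right charging argument with full, split, and forward/chain charges parametrized by $\delta$ and tied to equations~\eqref{eq:forwardCh}--\eqref{eq:splitCharges}---but you have the key case split for chains backwards, and that error propagates into how you would set up the chain inequality, so the proposal as written would not go through.

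Concretely: you say chains are formed by ``consecutive swaps'' (the algorithm scheduling $p_2$). In fact it is the opposite. When \LCalpha swaps to $p_2$, the scheduled packet $f=p_2$ has $d_f=t+1$, so it is \emph{not} expiring; the algorithm's packet is heavier than the discarded expiring $p_1$, and any adversary packet $j$ at step $t$ with $w_j>w_f$ would contradict the optimality of the plan. So swap steps can never be ``deficit'' steps. The steps that create a running deficit are exactly those where \LCalpha chooses the expiring $p_1$ yet $p_1$ is lighter than the adversary's non-expiring packet $j$ at $t$ --- i.e.\ steps where the algorithm declined to swap even though the adversary's best move is heavier. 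These are the chaining steps, and for them one proves $d_f=t$, $d_j=t+1$, $g=j$ (the adversary's packet becomes the algorithm's packet one step later), and, crucially, that the \emph{failure} of the swap test yields $2\alpha\scdot w_f\ge w_j+w_k$, where $k$ is the adversary's packet at $t+1$. That last inequality (a special case of a structural lemma about the plan: if an expiring $a$ is scheduled despite a heavier pending $b$ and a heavier lookahead $c$, then $2\alpha\scdot w_a\ge w_b+w_c$) is what bounds the cumulative charge inside a chain. It is \emph{not} the growth inequality $w_{p_1}<w_{p_2}/\alpha$ that you cite, which does not hold and is not used.

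Once the cases are correctly identified, the hard remaining work --- which your proposal does not touch --- is to control the interactions: one must show that split-charge pairs are pairwise disjoint (otherwise split charges could stack on a single slot), that a chaining step receives no full or split charge, and, most delicately, that the forward charge emitted from the end of a chain does not overload the first post-chain step when that step is \emph{also} the first element of a split-charge pair. The last scenario is the only one in which a step can receive all three kinds of charges simultaneously, and the paper devotes several lemmas to it, including further structural applications of the plan lemma. Without these compatibility arguments the charging scheme is not verified to be $R$-bounded, so they are not optional refinements but the core of the proof.
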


We also use the following properties of these constants:
\begin{align}
2-R-3\delta = 0\label{eq:2-R-3delta}\\
2-R-2\delta > 0 \label{ineq:2-R-2delta}\\
1-\delta-\frac{R-1+2\delta}{2\alpha} > 0 \label{ineq:1-d-frac}\\
1-\frac{R}{2\alpha} > 0\label{ineq:1-R/2alpha}\\
3\alpha\delta < R \label{ineq:chainBegCh}\\
2-\frac{R}{\alpha} < R\label{ineq:fwdChFromSingleton}
\end{align}
where~(\ref{eq:2-R-3delta}) follows from~(\ref{eq:forwardCh})
and~(\ref{eq:chainCharges}) and strict inequalities can be verified
numerically.

Let \ALG be the schedule produced by \LC. 
Let us consider an optimal schedule \OPT (a.k.a.\ schedule of the adversary) satisfying the 
canonical ordering, i.e., if a packet $x$ is scheduled before a packet $y$ in {\OPT}
then either $y$ is released after $x$ is scheduled or $x\prec y$.
Recall that we are assuming w.l.o.g.\ that the weights of packets are different.

The analysis of \LC is based on a charging scheme.
First we define a few packets by their schedule times; see Figure~\ref{fig:packetdef}.

\begin{figure}[ht]
	\begin{minipage}{3in}
	\begin{compactitem}
	\item $i$ = packet scheduled in step $t-1$ in \OPT,
	\item $j$ = packet scheduled in step $t$ in \OPT,
	\item $k$ = packet scheduled in step $t+1$ in \OPT,
	\item $e$ = packet scheduled in step $t-1$ in \ALG,
	\item $f$ = packet scheduled in step $t$ in \ALG,
	\item $g$ = packet scheduled in step $t+1$ in \ALG,
	\item $h$ = packet scheduled in step $t+2$ in \ALG.
	\end{compactitem}
	\end{minipage}
	%
	%
	\begin{minipage}{2.3in}
	\includegraphics[width=2.3in]{}
	\end{minipage}
	\caption{Packet definition.}
	\label{fig:packetdef}
\end{figure}

\smallskip

\myparagraph{Informal description of charging.}
We use three types of charges.  The adversary's packet $j$ in step
$t$ is charged using a \textit{full charge} either to step $t-1$ if
\ALG schedules $j$ in step $t-1$ or to step $t$ if $w_f\geq w_j$
(including the case $f=j$) and $f$ is not in step $t+1$ in \OPT; the
last condition assures that step $t$ does not receive two full
charges.

The second type are \textit{split charges} that occur in step $t$ if
$w_f > w_j$, $j$ is pending in step $t$ in {\ALG} and $f$ is in step
$t+1$ in \OPT, i.e., step $t$ receives a full back charge from $f$.
In this case, we distribute the charge from $j$ to $f$ and another
relatively large packet $f'$ scheduled in step $t+1$ or $t+2$ in
\ALG; we shall prove that one of these steps satisfies $2\alpha\scdot
w_j<w_f+w_f'$. We charge to step $t+2$ only when it is necessary,
which allows us to prove that split-charge pairs are pairwise
disjoint. Also, in this case we analyze the charges to both steps
together, thus it is not necessary to fix a distribution of the weight
to the two steps.

The remaining case is when $w_f < w_j$ and $j$ is not scheduled in
$t-1$ in \ALG.  We analyze these steps in maximal consecutive
intervals, called \textit{chains} and the corresponding charges are
\textit{chain charges}.
Inside each chain we distribute the charge of each packet $j$
scheduled at $t$ in \OPT to steps $t-1$, $t$ and $t+1$, if these steps
are also in the chain.  The distribution of weights shall depend on
a parameter $\delta$.  Packets at the beginning and at the end of the
chain are charged in a way that minimizes the charge to steps outside
of the chain. In particular, the step before a chain receives no charge
from the chain. 


\myparagraph{Notations and the charging scheme.} 
A step $t$ for which $w_f < w_j$ and $j$ is pending in step $t$ in \ALG
is called \textit{a chaining step}.
A maximal sequence of successive chaining steps is called a \textit{chain}.
The chains with a single step are called \textit{singleton chains},
the chains with at least two steps are called \textit{long chains}.

The pair of steps that receive a split charge from the same packet is
called a \textit{split-charge pair}. The charging scheme does not
specify the distribution of the weight to the two steps of the
split-charge pair, as the charges to them are analyzed together.

Packet $j$ scheduled in \OPT at time $t$ is charged
according to the first rule below that applies. See
Figure~\ref{fig:nonchaining} for an illustration of the first four
(non-chaining) charges and Figure~\ref{fig:chaining} for an illustration
of the chaining charges.

\begin{figure} 
\centering
\includegraphics[scale=0.28]{}
\caption{Non-chaining charges. Note that for split charges $f$ is
  scheduled in step $t+1$ in \OPT 
which follows from the fact that we do not charge $j$ using a full up charge.}
\label{fig:nonchaining}
\end{figure}

\begin{figure} 
\centering
\includegraphics[scale=0.35]{}

\caption{On the left, a chain of length 3 starting in step $t-1$ and
ending in step $t+1$. The \emph{chain beginning charges} are denoted
by dotted (blue) lines, the chain end charges are denoted by
gray lines and the \emph{forward charge from a chain} is depicted by a dashed (red) arrow.
Black arrows denote the \emph{chain link charges}.
On the right, an example of a singleton chain, with the
\emph{up charge from a singleton chain} denoted with a dashed (green)
line and the \emph{forward charge from a singleton chain} denoted with
a dotted (orange) line.}

\label{fig:chaining}
\end{figure}

\begin{enumerate}
\item \label{ch:back} If $j$ is scheduled in step $t-1$ in \ALG (that is, $e=j$),
  charge $w_j$ to step $t-1$. We call this charge a \textit{full back charge}.
\item \label{ch:up} If $w_f\geq w_j$ and $f$ is not scheduled in step $t+1$ in \OPT (in particular, if $j=f$),
charge $w_j$ to step $t$. We call this charge a \textit{full up charge}.
\item \label{ch:split} If $w_f > w_j$ and at least one of the following holds:
\begin{compactitem}
\item $2\alpha\scdot w_j < w_f + w_g$,
\item $g$ does not get a full back charge and $2\alpha\scdot (w_{p_1} - w_g) < w_f + w_g$ where
$p_1$ is the first packet in the plan at time $t$,
\end{compactitem}
then charge $w_j$ to the pair of steps $t$ and $t+1$. We call this charge a \textit{close split charge}.
\item \label{ch:distant} If $w_f > w_j$, then charge $w_j$ to the pair
  of steps $t$ and $t+2$. We call this charge a \textit{distant split charge}.
\item \label{ch:chain} Otherwise step $t$ is a chaining step, as $w_f
  < w_j$ and \ALG\ does not schedule $f$ in step $t-1$ by the previous
  cases. We distinguish the following subcases.
\begin{enumerate}
\item \label{ch:chainSingle} If step $t$ is (the only step of) a
  singleton chain, then charge $\min(w_j, R\scdot w_f)$ to step $t$ and
  $w_j - R\scdot w_f$ to step $t+1$ if $w_j>R\scdot w_f$. We call these
  charges an \textit{up charge from a singleton chain} and a
  \textit{forward charge from a singleton chain}.
\item \label{ch:chainBeginning} If step $t$ is the first step of a
  long chain, charge $2\delta\scdot w_j$ to step $t$, and
  $(1-2\delta)\scdot w_j$ to step $t+1$. We call these charges
  \textit{chain beginning charges}.
\item \label{ch:chainEnd} If step $t$ is the last step of a long
  chain, charge $\delta\scdot w_j$ to step $t-1$, $(R - 1 +
  2\delta)\scdot w_f$ to step $t$, and $(1-\delta)\scdot w_j - (R - 1 +
  2\delta)\scdot w_f$ to step $t+1$. We call these charges
  \textit{chain end charges}; the charge to step $t+1$ is called a
  \textit{forward charge from a chain}.  (Note that we always have
  $(1-\delta)\scdot w_j > (R - 1 + 2\delta)\scdot w_f$, since $w_j >
  w_f$ and $1-\delta = R - 1 + 2\delta$ which follows
  from~(\ref{eq:2-R-3delta}).)
\item \label{ch:chainInside} Otherwise, i.e., step $t$ is inside a
  long chain, charge $\delta\scdot w_j$ to step $t-1$, $\delta\scdot
  w_j$ to step $t$, and $(1-2\delta)\scdot w_j$ to step $t+1$. We call
  these charges \textit{chain link charges}.
\end{enumerate}
\end{enumerate}

To estimate the competitive ratio we need to show that each step
or a pair of steps does not receive too much charge.
We start with a useful observation about plans of Algorithm~{\LCalpha}, 
that will be used multiple times in our proofs.

\begin{lemma}\label{l:scheduledTightPacketNotTooLight}
Consider a time $t$, where the algorithm has two pending packets $a$, $b$ and a
lookahead packet $c$ with the following properties:
$d_a =t$, $(r_b,d_b) =  (t, t+1)$, $(r_c,d_c) =  (t+1, t+2)$, 
and $w_a < \min (w_b, w_c)$.
If the algorithm schedules packet $a$ in step $t$ then the plan
at time $t$ is $a,b,c$, and $2\alpha\scdot w_a\ge w_b+w_c$.
\end{lemma}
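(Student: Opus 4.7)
The plan is to first pin down the structure of \LCalpha's plan at time $t$ — showing $(p_1,p_2,p_3)=(a,b,c)$ — and then to read off the claimed inequality directly from the guard of the \textbf{if/else} in the algorithm's definition.

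To identify $(p_1,p_2,p_3)$, I first argue $a = p_1$. Because $d_a = t$, the packet $a$ is eligible only for slot $t$ of any schedule; on the other hand, \LCalpha only ever schedules $p_1$ (placed in slot $t$) or $p_2$ (placed in slot $t+1$) of its current plan. Since the algorithm scheduled $a$, we must have $a = p_1$. Now with slot $t$ occupied by $a$, the only candidate for slot $t+2$ in the plan is a packet with deadline $\ge t+2$, and $c$ is the only such packet available; thus $p_3 = c$. Likewise, slot $t+1$ must be filled by a packet with $r \le t+1$ and $d \ge t+1$, and the only remaining candidate is $b$; thus $p_2 = b$. Both $b$ and $c$ belong to the optimal plan because their weights exceed $w_a \ge 0$ and their placement in the two otherwise empty slots is feasible; the earliest-deadline property is also satisfied.

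With the plan fixed, the conclusion follows directly. Since \LCalpha chose $p_1 = a$ rather than $p_2 = b$, the \textbf{if} guard of the algorithm evaluated to false. We have $r_{p_2} = r_b = t$, so the first conjunct of the guard holds, and the hypothesis $w_a < \min(w_b,w_c)$ gives $w_{p_1} < \min(w_{p_2}, w_{p_3})$, so two of the three conditions inside the inner $\min$ also hold. The only way the guard can still fail is therefore $w_{p_1} \ge \frac{1}{2\alpha}(w_{p_2} + w_{p_3})$, which rearranges to $2\alpha \cdot w_a \ge w_b + w_c$.

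The only mildly delicate point is ensuring that no other choice of plan could beat $(a,b,c)$. But since the only packets in play are $a, b, c$ and $a$'s tightness forces it into slot $t$, every feasible schedule is contained in $\{a,b,c\}$ placed into the slots above, and the optimum is uniquely $(a,b,c)$; the remainder of the argument is bookkeeping against the algorithm's single branching rule.
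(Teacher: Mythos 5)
Your argument that $a = p_1$ is fine (since $d_a = t$ forces $a$ into slot $t$, and the algorithm only ever schedules $p_1$ or $p_2$), and your derivation of $2\alpha\scdot w_a \ge w_b + w_c$ from the negated guard is exactly right. But there is a real gap in the middle: you assert that ``$c$ is the only such packet available'' for slot $t+2$, that ``the only remaining candidate is $b$'' for slot $t+1$, and explicitly that ``the only packets in play are $a,b,c$.'' None of that is a hypothesis of the lemma. The lemma only says that $a$, $b$, $c$ \emph{exist} with the stated properties; it does not say the algorithm's set of pending and lookahead packets is $\{a,b,c\}$. Indeed, in each place the paper invokes this lemma there are typically many other packets around.

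What you actually need to rule out is the existence of some other pending or lookahead packet $q \notin \{b,c\}$ with $w_q > w_a$. This is the key step the paper supplies and your proposal is missing: if such a $q$ existed, then in the 2-bounded setting one can always pack $q,b,c$ into slots $t,t+1,t+2$ (using that $b$ can go in slot $t$ if $q$ cannot), giving a feasible schedule of strictly larger profit than $a,b,c$. Hence the plan would not contain $a$, and the algorithm could not have scheduled $a$ --- contradiction. Only after ruling out such $q$ can you conclude that no packet outside $\{a,b,c\}$ can improve the plan, and therefore that the plan is exactly $a,b,c$. Without this step, your identification of $p_2=b$ and $p_3=c$ is unjustified, and the rest of the argument does not go through.
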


\begin{proof}
We claim that there is no pending or lookahead packet $q
\notin\braced{b,c}$ heavier than $a$. Suppose for a contradiction that
such a $q$ exists. Then a schedule containing packets $q,b,c$ in some
order is feasible and has larger profit than $a,b,c$. This implies
that the plan does not contain $a$ and thus $a$ cannot be scheduled,
contradicting the assumption of the lemma.

The schedule $a,b,c$ is feasible and the claim above implies that it
is optimal, thus it is the plan. It remains to show that $2\alpha\scdot
w_a \geq w_b+w_c$, which follows easily by a contradiction: Otherwise 
$2\alpha\scdot w_a < w_b+w_c$ and \LCalpha\ would schedule $b$,
contradicting the assumption of the lemma.
\end{proof}

Next, we will provide an analysis of full, split and chain charges,
starting with full and split charges. We prove several lemmas from which the analysis follows.
We fix some time slot $t$, and use the notation from Figure~\ref{fig:packetdef} for
packets at time slots $t-1$, $t$, $t+1$ and $t+2$ in the schedule of the algorithm and
the optimal schedule.


\smallskip

\myparagraph{Analysis of full charges.}  Using 
Rules~\ref{ch:back} and~\ref{ch:up}, if step $t$
receives a full back charge, then the condition of Rule~\ref{ch:up} guarantees
that it will not receive a full up charge. This gives us the following
observation.

\begin{lemma}\label{l:oneFullCharge}
Step $t$ receives at most one full charge, i.e., a charge by Rule \ref{ch:back}
or \ref{ch:up}.
\end{lemma}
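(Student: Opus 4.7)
The plan is to verify that both kinds of full charges (Rules~\ref{ch:back} and~\ref{ch:up}) can originate at only one position relative to step~$t$, and then to show that the conditions of these two rules are mutually exclusive.

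First I would observe the uniqueness of each kind of charge in isolation. Since \OPT\ schedules exactly one packet in each time slot, Rule~\ref{ch:back} can contribute at most one full back charge to step~$t$, namely the charge from the \OPT\ packet $k$ scheduled at time $t+1$ (provided $e' = k$, where $e'$ denotes \ALG's step-$t$ packet, i.e.\ $f = k$). Similarly, Rule~\ref{ch:up} can contribute at most one full up charge to step~$t$, namely from the \OPT\ packet $j$ scheduled at time~$t$.

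The key step is to rule out that both events occur simultaneously. Suppose step~$t$ receives a full back charge. By Rule~\ref{ch:back}, this means that the \OPT\ packet at time $t+1$ is identical to $f$ (the packet \ALG\ schedules at step $t$); equivalently, $f$ is scheduled at step $t+1$ in \OPT. But Rule~\ref{ch:up} explicitly requires that ``$f$ is not scheduled in step $t+1$ in \OPT''. Hence the full up charge from $j$ to step~$t$ is precluded, completing the proof.

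The main obstacle here is essentially bookkeeping rather than any substantive combinatorics: one just has to be careful to match the role of $f$ (the \ALG\ packet at step~$t$) in the two rules and to note that Rule~\ref{ch:up} was deliberately designed with the clause ``$f$ is not scheduled in step~$t+1$ in \OPT'' precisely to avoid this double counting. No properties of the algorithm beyond the statement of the charging rules are needed.
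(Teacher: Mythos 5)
Your proposal is correct and matches the paper's argument exactly: the paper also observes that the clause ``$f$ is not scheduled in step $t+1$ in \OPT'' in Rule~\ref{ch:up} directly rules out a full up charge whenever a full back charge is present. You spell out the uniqueness of each charge type a bit more explicitly, but the substance is identical.
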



\myparagraph{Analysis of split charges.}
We now analyze close and distant split charges. The crucial property of split charges
is that, similar to full charges, each step receives at most one split charge.
Before we prove this, we establish several useful properties of split charges.


\begin{lemma}\label{obs:splitCharges}
Let the plan at time $t$ be $p_1,p_2,p_3$.
If $j$ is charged using a close or a distant split charge, then the following holds:
\begin{compactenum}[(a)]
\item $j$ is not scheduled by the algorithm in step $t-1$, i.e., $j$ is pending for the algorithm in step $t$.
\item $d_f = t+1$ and $f$ is scheduled in step $t+1$ in \OPT (that is, $k=f$). 
In particular, step $t$ receives a full back charge.
\item $d_j = t$ and $w_j\le w_{p_1}$.
\item $p_2 = f$.
\end{compactenum}
\end{lemma}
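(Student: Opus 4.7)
I would prove parts (a)--(d) in order, each building on the previous. Parts (a)--(c) essentially unpack the definitions of the charging rules together with \OPT's canonical property; (d) is the real content.

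Part (a) is immediate from the rule ordering: Rule~\ref{ch:back} precedes the split-charge rules and handles exactly the case $j=e$, so $j\neq e$ in a split-charge case. Combined with $j\neq f$ (from $w_f>w_j$), $j$ is pending for \ALG{} at $t$. For part (b), Rule~\ref{ch:up} has two conjuncts; $w_f>w_j$ satisfies the first, so its failure forces $f$ to be scheduled at $t+1$ by \OPT, i.e.\ $k=f$. Then $r_f\leq t$ and $2$-boundedness pin $d_f=t+1$; and applying Rule~\ref{ch:back} to $k=f$ itself produces the full back charge to step $t$.

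For part (c), applying \OPT's canonical property to $j$ (at $t$) and $k=f$ (at $t+1$), using $r_f\leq t$, gives $j\prec f$. Since $w_f>w_j$ rules out $d_j=d_f$, we obtain $d_j=t$. For the bound $w_j\leq w_{p_1}$, the equality $d_j=t$ confines $j$ to slot $t$, so either $j=p_1$ or $j$ is dropped by the plan; in the latter case the single-slot replacement of $p_1$ by $j$ is feasible and the inequality follows from plan-optimality.

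Part (d) is the main obstacle. Since \ALG{} schedules only $p_1$ or $p_2$, we have $f\in\{p_1,p_2\}$, so it suffices to rule out $f=p_1$ by contradiction. Assuming $f=p_1$ forces $d_{p_1}=t+1$ and, by the earliest-deadline property of the plan, $d_{p_2}\geq t+1$, so the plan contains no expiring packet. Two parallel swap arguments then pin down a contradiction. On the plan side, the alternative schedule $(j,f,p_3)$ is feasible (using (c) and the $2$-boundedness of the lookahead packet $p_3$), so plan-optimality yields $w_{p_2}\geq w_j$. On the \OPT{} side, replacing $(j,f)$ in slots $(t,t+1)$ by $(f,p_2)$ is feasible, and \OPT-optimality yields $w_{p_2}\leq w_j$. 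Combining forces $w_{p_2}=w_j$, contradicting distinctness of weights; hence $f=p_2$. The technical subtlety is the branch where $d_{p_2}=t+2$ and $p_2$ already occupies slot $t+2$ of \OPT, which blocks the naive \OPT-swap; in that case I would instead pair \OPT{} with the plan alternative $(j,f,p_2)$ (feasible because $d_{p_2}=t+2$), which by the symmetric computation yields $w_{p_3}=w_j$ and the same contradiction.
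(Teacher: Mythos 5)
Your proof is correct and follows essentially the same argument as the paper; the only difference is cosmetic, in (d), where you present the $p_2$-based \OPT swap as primary and fall back to $p_3$ when $p_2$ occupies slot $t+2$ of \OPT, whereas the paper splits up front on $d_{p_2}\in\{t+1,t+2\}$ and, in the latter case, invokes a pigeonhole argument to pick one of $p_2,p_3$ that is free. One small point worth making explicit is why occupying slot $t+2$ of \OPT is the only way $p_2$ can obstruct the swap: since $p_2$ sits at slot $t+1$ of the plan, we have $d_{p_2}\geq t+1$, so 2-boundedness forces $r_{p_2}\geq t$, hence $p_2$ could appear in \OPT only at slots $t$, $t+1$, or $t+2$, and the first two are ruled out because $p_2\neq j$ (different deadlines, by (c)) and $p_2\neq f=p_1$.
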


\begin{proof}
By Rule~\ref{ch:back}, packet $j$ would be charged using a full back
charge if it were scheduled in step $t-1$, implying (a). The case
conditions for split charges in the charging scheme imply that \OPT
schedules $f$ in step $t+1$ and $w_f > w_j$. Now (b) follows from the
fact that we do not charge $j$ using a full up charge.

To show (c), note that if $j$ is not expiring, then $j$ and $f$ would
have equal deadlines.  As we also have $w_f > w_j$, $f$ would be
scheduled before $j$ in \OPT by the canonical ordering, a
contradiction.  The inequality $w_j\le w_{p_1}$ now follows from the
definition of the plan.

It remains to prove (d). Towards contradiction, suppose that $f=p_1$.
We know that $j$ is expiring and thus it is not in the plan.
If $d_{p_2} = t+1$ then the optimality of the plan implies
$w_{p_2} > w_j$ (otherwise $j, f, p_3$ would be a better plan),
so, since $p_2$ is not in {\OPT},
we could improve {\OPT} by scheduling $f$ in step $t$ and $p_2$ in step $t+1$.

Next, assume that $d_{p_2} = t+2$.
The optimality of the plan implies that 
$w_{p_2} > w_j$ and $w_{p_3} > w_j$.
Since both $p_2$, $p_3$ have deadline $t+2$,
at least one of them is not scheduled in \OPT.
So \OPT could be improved by scheduling $f$ in step $t$ and one of $p_2$ or $p_3$ in step $t+1$.
In both cases we get a contradiction with the optimality of \OPT.
\end{proof}

We show a useful lemma about a distant split charge from which we derive an upper bound
on $w_j$, similar as the upper bound in the definition of close split charge.

\begin{lemma}\label{l:distantSplitCh:g<p3}
If $j$ is charged using a distant split charge, then
$w_g < w_{p_3}$ where $p_3$ is the third packet in the plan at time $t$, and $d_g = t+1$.
\end{lemma}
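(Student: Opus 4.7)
The plan is to establish the two conclusions separately, first the weight inequality $w_g < w_{p_3}$ and then the deadline equality $d_g = t+1$.

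For $w_g < w_{p_3}$, I would chain three inequalities. Lemma~\ref{obs:splitCharges}(c) yields $w_j \le w_{p_1}$. Since the algorithm schedules $f = p_2$ at step $t$, its scheduling condition gives $w_{p_1} < \frac{1}{2\alpha}(w_{p_2} + w_{p_3})$, i.e., $2\alpha\, w_{p_1} < w_f + w_{p_3}$. Because distant split applies, the first criterion of the close split charge must fail, so $2\alpha\, w_j \ge w_f + w_g$. Combining these, $w_f + w_g \le 2\alpha\, w_j \le 2\alpha\, w_{p_1} < w_f + w_{p_3}$, whence $w_g < w_{p_3}$.

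For $d_g = t+1$, I would argue by contradiction, assuming $d_g = t+2$. This is the only other possibility since $g$ is pending at $t+1$ and the instance is $2$-bounded. From the plan at time $t$ and $2$-boundedness, $p_3$ has $r_{p_3} = t+1$ and $d_{p_3} = t+2$, and since $p_3$ was not yet released at $t$, it is pending at $t+1$. The inequality $w_g < w_{p_3}$ forces $g \neq p_3$. Consider the plan $q_1, q_2, q_3$ at time $t+1$; the algorithm schedules either $q_1$ or $q_2$, so $g \in \{q_1, q_2\}$. If $p_3 \notin \{q_1, q_2, q_3\}$, I would obtain a feasible alternative by replacing $g$ in its slot by $p_3$ ($p_3$ fits both possible slots $t+1$ and $t+2$ by its release time and deadline), yielding a profit gain of $w_{p_3} - w_g > 0$ and contradicting the optimality of the plan. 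Otherwise $p_3$ is in the plan; since $p_3 \neq q_3$ (because $d_{q_3} \geq t+3 > d_{p_3}$), we have $\{g, p_3\} = \{q_1, q_2\}$. Canonical ordering among same-deadline packets places the heavier one first, so $p_3 = q_1$ and $g = q_2$. But then the algorithm's condition for scheduling $q_2 = g$ requires $w_{q_1} < w_{q_2}$, which means $w_{p_3} < w_g$, contradicting $w_g < w_{p_3}$.

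The weight inequality is a short chain, so the bulk of the work lies in the deadline argument. Its main subtlety is systematically enumerating the ways $g$ and $p_3$ can occupy the plan at $t+1$ and using both the canonical ordering and the algorithm's explicit scheduling condition to rule each configuration out; the preceding weight inequality is what makes these contradictions possible.
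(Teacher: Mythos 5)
Your proof is correct and follows essentially the same route as the paper's: the weight inequality is the contrapositive of the paper's argument (the paper assumes $w_g\ge w_{p_3}$ and concludes the close split charge would have fired), and the deadline argument is the same contradiction the paper invokes in one line -- namely that $d_g=t+2$ together with $d_{p_3}=t+2$ forces $w_g\ge w_{p_3}$ -- with the plan-at-$(t+1)$ case analysis you supply being exactly the reasoning the paper leaves implicit.
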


\begin{proof}
Suppose that $w_g \ge w_{p_3}$.
Then, from Lemma~\ref{obs:splitCharges}(c),(d) and the choice of $p_2 = f$ in the algorithm, we have that
$2\alpha w_j \le 2\alpha w_{p_1}
				< w_{p_2} + w_{p_3}
				\le w_f + w_g$,
so we would use the close split charge in step $t$, not the distant one.
Thus $w_g < w_{p_3}$, as claimed.

To prove the second part, if we had $d_g = t+2$ then, since the
algorithm chose $g$ in step $t+1$ and also $d_{p_3} = t+2$,
we would also have that $w_g\ge w_{p_3}$ -- a contradiction.
\end{proof}


\begin{lemma}\label{l:distributeCharge}
If $j$ is charged using a distant split charge then $2\alpha\scdot w_j < w_f + w_h$. 
(Recall that $h$ is the packet scheduled in step $t+2$ in \ALG.)
\end{lemma}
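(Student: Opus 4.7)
\medskip

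\noindent
\textbf{Proof proposal.} The plan is to combine the algorithm's scheduling rule at step $t$ (which implies a weight inequality involving the third plan packet $p_3$) with the algorithm's property that it always schedules a packet at least as heavy as the heaviest expiring packet. The main work is to argue that $p_3$ is still pending and expiring at step $t+2$, which forces $w_h\ge w_{p_3}$.

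First, I would invoke Lemma~\ref{obs:splitCharges} to obtain $p_2=f$ and $w_j\le w_{p_1}$. Since the algorithm schedules $f=p_2$ at step $t$ rather than $p_1$, the condition in \LC\ must fire for the plan $p_1,p_2,p_3$ at time $t$; in particular this forces $p_3$ to exist and gives
\[
2\alpha\cdot w_{p_1} \;<\; w_{p_2}+w_{p_3} \;=\; w_f+w_{p_3}.
\]
Combined with $w_j\le w_{p_1}$ this yields $2\alpha\cdot w_j < w_f + w_{p_3}$. So it suffices to prove $w_h\ge w_{p_3}$.

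Next, I would pin down the release and deadline of $p_3$. Since $p_3$ sits in slot $t+2$ of the plan, $d_{p_3}\ge t+2$, and 2-boundedness gives $r_{p_3}\ge t+1$; on the other hand, the plan may only use packets visible at time $t$, i.e., pending or lookahead ones, so $r_{p_3}\le t+1$. Hence $r_{p_3}=t+1$ and $d_{p_3}=t+2$. In particular $p_3$ is pending from step $t+1$ onward until scheduled. By Lemma~\ref{l:distantSplitCh:g<p3}, $w_g<w_{p_3}$, so $g\ne p_3$, and therefore $p_3$ is still pending at step $t+2$, where it is expiring.

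Finally, the remark right after the definition of \LCalpha\ (the scheduled packet is at least as heavy as the heaviest expiring packet) applied to step $t+2$ gives $w_h\ge w_{p_3}$, and combining with the inequality from the first paragraph yields $2\alpha\cdot w_j < w_f + w_{p_3} \le w_f + w_h$, as required.

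The step I expect to need the most care is verifying that $p_3$ really is expiring at step $t+2$: this requires both the existence of $p_3$ in the plan (tied to the fact that the algorithm chose $p_2$, not $p_1$, at step $t$), and the identification $r_{p_3}=t+1$ using 2-boundedness together with the lookahead window. The rest is a short chain of inequalities.
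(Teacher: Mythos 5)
Your proof is correct and follows essentially the same route as the paper's: invoke Lemma~\ref{obs:splitCharges}(c),(d) to get $f=p_2$ and $w_j\le w_{p_1}$, use the algorithm's rule to get $2\alpha w_{p_1}<w_f+w_{p_3}$, then use Lemma~\ref{l:distantSplitCh:g<p3} to conclude $g\neq p_3$ and hence $w_{p_3}\le w_h$. The only difference is that you spell out in more detail (via 2-boundedness and the lookahead window) why $p_3$ is expiring at step $t+2$, which the paper treats as immediate.
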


\begin{proof}
Let $p_1, p_2, p_3$ be the plan in step $t$.
By Lemma~\ref{obs:splitCharges}(d) we have that $f=p_2$. 
Thus $2\alpha\scdot w_{p_1} < w_{p_2} + w_{p_3}$ by the definition of the algorithm.
By Lemma~\ref{obs:splitCharges}(c), $j$ is expiring and $w_j \le w_{p_1}$.
As $g\neq p_3$ by Lemma~\ref{l:distantSplitCh:g<p3},
the algorithm has $p_3$ pending in step $t+2$ where it is expiring, implying that
$w_{p_3}\le w_h$. 
Putting it all together, we get
$2\alpha w_j \le 2\alpha w_{p_1} < w_{p_2} + w_{p_3} \le w_f + w_h$.
\end{proof}

For a split charge from $j$ in step $t$, let $t'$ be the other step
that receives the split charge from $j$; that is, $t'=t+1$ for a close split
charge and $t'=t+2$ for a distant split charge.
We now show that split-charge pairs are pairwise disjoint.


\begin{lemma}\label{l:noTwoDistribCharges}
If $j$ is charged using a split charge to a pair of steps $t$ and $t'$,
then neither of $t$ and $t'$ is involved in another pair that receives
a split charge from a packet $j'\neq j$.
\end{lemma}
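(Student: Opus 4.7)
My plan is to argue by contradiction. Suppose that some packet $j'\neq j$, scheduled by {\OPT} at step $\tau$, has a split-charge pair intersecting $\{t,t'\}$. Since {\OPT} schedules at most one packet per slot we have $\tau\neq t$, and enumerating all pairings of close pairs $(\tau,\tau{+}1)$ and distant pairs $(\tau,\tau{+}2)$ against $(t,t')$ shows that any overlap forces $\tau\in\{t-2,t-1,t+1,t+2\}$. Moreover, $\tau=t-2$ requires $j'$ to be charged by a \emph{distant} split, and $\tau=t+2$ requires $j$ to be.

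The two ``nearby'' cases $\tau=t\pm1$ follow immediately from Lemma~\ref{obs:splitCharges}. For $\tau=t-1$, part~(b) applied to $j'$ says {\OPT} at step $t$ is the algorithm's packet at step $t-1$, so $j=e$; but part~(a) applied to $j$ gives $j\neq e$. Symmetrically, for $\tau=t+1$, part~(b) applied to $j$ gives $j'=f$, while part~(a) applied to $j'$ gives $j'\neq f$.

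For $\tau=t-2$, I would apply Lemma~\ref{l:distantSplitCh:g<p3} to $j'$ to conclude that $d_e=t$ and that $e$ is strictly lighter than the third packet $q$ of the plan at step $t-2$. By $2$-boundedness, $d_q=t$ and $q$ is pending for the algorithm at step $t-1$. A short case analysis on which of the first two plan packets the algorithm scheduled at step $t-1$ then closes each sub-case: if $e$ coincides with $p_1$ or $p_2$ in the plan at step $t-1$, either the canonical order of that plan is violated (two packets with deadline $t$, the heavier one not ranked first) or the plan can be strictly improved by inserting $q$ in place of a lighter scheduled packet, contradicting plan optimality.

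The case $\tau=t+2$ is the main obstacle. Lemma~\ref{l:distantSplitCh:g<p3} applied to $j$ yields $d_g=t+1$ and $w_g<w_{p_3}$, where $p_3$ is the third packet of the plan at $t$; by $2$-boundedness, $d_{p_3}=t+2$ and $r_{p_3}=t+1$, so $p_3$ stays pending through steps $t+1$ and $t+2$. If $p_3\neq j'$, I would contradict {\OPT}'s optimality by a swap: either $w_{p_3}>w_{j'}$, in which case replacing $j'$ by $p_3$ at step $t+2$ in {\OPT} is a strict improvement, or $w_{p_3}<w_{j'}$, in which case the plan at step $t$ (which schedules $p_3$, not the heavier $j'$, in the slot with deadline $t+2$) forces $r_{j'}=t+2$, and then swapping $j$ at step $t$ with $p_3$ in {\OPT} improves it, using $w_j\le w_{p_1}<w_{p_3}$ from Lemma~\ref{obs:splitCharges}(c) and the algorithm's rule at $t$. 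The remaining subcase $p_3=j'$ is the technical heart: I would first argue that the heaviest packet expiring at $t+2$ in the plan at $t+2$ is again $j'$ (else an analogous swap contradicts {\OPT}), and then combine the failure of the close-split condition for $j$ — using that $g\neq j'$, so $g$ does not receive a full back charge — with the algorithm's selection rule at $t$ to derive $w_{j'}>(2\alpha+1)w_g$. Finally, the algorithm's choice of $g=p_1^{t+1}$ over $p_2^{t+1}=j'$ at step $t+1$ forces either $w_g\ge w_{p_3^{t+1}}$, which contradicts plan optimality at $t+1$ (the packet $h$ would then be a strictly better candidate for the slot occupied by $p_3^{t+1}$), or $2\alpha w_g\ge w_{j'}+w_{p_3^{t+1}}$, which together with $w_{j'}>(2\alpha+1)w_g$ immediately yields $w_g<0$. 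This rules out $\tau=t+2$ as well and completes the proof.
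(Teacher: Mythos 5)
Your case split on $\tau$ and your handling of $\tau=t\pm1$ via Lemma~\ref{obs:splitCharges}(a),(b) match the paper. The difficulty is in the outer cases, and here your write-up has both an outright error and a gap, and it misses the two observations that make the paper's proof short.

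For $\tau=t-2$: applying Lemma~\ref{l:distantSplitCh:g<p3} to $j'$ at step $t-2$ gives the deadline of the packet scheduled by \ALG\ at step $(t-2)+1=t-1$, i.e.\ $d_e=t-1$, not $d_e=t$ as you claim. With the correct deadline, the ``canonical order of two packets with deadline $t$'' argument you sketch does not apply to $e$, and the case is not actually closed. More importantly, this whole case is redundant: as the paper observes, the statement ``if $j$ at step $t$ is distant-split-charged then the packet at $t+2$ in \OPT\ is not split-charged'' applied at step $t-2$ is exactly the contrapositive of what you want for $\tau=t-2$, so proving the $\tau=t+2$ case alone suffices.

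For $\tau=t+2$: the split into $p_3=j'$ versus $p_3\neq j'$ is unnecessary, and in the subcase $p_3=j'$ you assert ``the algorithm's choice of $g=p_1^{t+1}$ over $p_2^{t+1}=j'$'' without establishing $p_2^{(t+1)}=j'$. That claim is true but needs an argument (roughly: since $w_{p_3}>w_g$ and $p_3$ is pending at $t+1$ with $d_{p_3}=t+2$, the plan at $t+1$ must contain $p_3$ at slot $t+2$), and what you do justify (``heaviest packet expiring at $t+2$ in the plan at $t+2$ is $j'$'') is about the plan at $t+2$, not the plan at $t+1$. The paper avoids all of this by applying Lemma~\ref{l:scheduledTightPacketNotTooLight} once, at step $t+1$ with $a=g$, $b=p_3$, $c=h$, to get $2\alpha w_g\ge w_{p_3}+w_h>w_{p_3}$; subtracting this from the algorithm's rule $2\alpha w_{p_1}<w_f+w_{p_3}$ at step $t$ yields $2\alpha(w_{p_1}-w_g)<w_f$, contradicting the failure of the close-split condition, uniformly and without case-splitting on whether $p_3=j'$. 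Your long chain of inequalities and plan-optimality swaps is essentially re-deriving that lemma's content from scratch; invoking it directly is the key simplification you missed.
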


\begin{proof}
No matter which split charge we use for $j$,
using Lemma~\ref{obs:splitCharges}(b), step $t+1$ does not receive a
split charge from $k=f$. By a similar argument, since $j$ is not scheduled
in step $t-1$ in {\ALG}, step $t$ does not receive 
a close split charge from the packet scheduled in step $t-1$ in {\OPT}.

It remains to prove that if $j$ is charged using a distant split charge,
then the packet $\ell$ scheduled in step $t+2$ in \OPT is not charged using
a split charge. (This also ensures that step $t$ does not receive a
distant split charge 
from a packet scheduled in step $t-2$ in \OPT.)

For a contradiction, suppose that packet $\ell$ is charged using a split charge.
Let $p_1, p_2, p_3$ be the plan in step $t$.
Recall that $g$ and $h$ are the packets scheduled in steps $t+1$ and $t+2$ in \ALG.

From Lemma~\ref{l:distantSplitCh:g<p3}, step $t+1$ does not receive a full back charge. Since we did not apply
the close split charge for $j$ in Rule~\ref{ch:split}, we must have
\begin{equation}\label{eq:no2splitChEq}
2\alpha (w_{p_1} - w_g) \ge w_f+w_g \ge w_f.
\end{equation}

By Lemma~\ref{obs:splitCharges}(b) applied to step $t+2$, we get $d_h = t+3$.
Since $d_{p_3} = t+2$, we get $w_{p_3} < w_h$.
We now use Lemma~\ref{l:scheduledTightPacketNotTooLight} for step $t+1$ with
$a=g, b=p_3$, and $c=h$. We note that all the assumptions of the lemma are
satisfied: we have $d_g = t+1$, $(r_{p_3},d_{p_3}) = (t+1,t+2)$, 
$(r_h,d_h) = (t+2,t+3)$, and  $w_g < w_{p_3} < w_h$.
This gives us that 
$2\alpha  w_g \ge w_{p_3}+w_h > w_{p_3}$.

Since the algorithm schedules $f=p_2$ in step $t$, we have
$2\alpha  w_{p_1}  < w_f+w_{p_3}$.
Subtracting the inequality derived in the previous paragraph, we
get $2\alpha  (w_{p_1}-w_g)  < (w_f + w_{p_3}) - w_{p_3}  = w_f$
-- a contradiction with (\ref{eq:no2splitChEq}).
This completes the proof.
\end{proof}

The lemmas above allow us to estimate the total of full and split charges. 

\begin{lemma}\label{l:splitChargeUB}
If $j$ is charged using a split charge to a pair of steps $t$ and $t'$,
then the total of full and split charges
to steps $t$ and $t'$ does not exceed $R\scdot (w_f + w_{f'})$
where $f'$ is the packet scheduled in step $t'$ in \ALG.
\end{lemma}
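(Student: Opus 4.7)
The plan is to handle the two cases of the split charge---close ($t' = t+1$) and distant ($t' = t+2$)---separately, but with a shared starting observation. By Lemma~\ref{obs:splitCharges}(b), OPT schedules $f$ at time $t+1$, so Rule~\ref{ch:back} produces a full back charge of exactly $w_f$ to step $t$, and this consumes OPT's packet at $t+1$: no other charging rule will fire for $f$. Lemma~\ref{l:oneFullCharge} then precludes any further full charge to $t$, and Lemma~\ref{l:noTwoDistribCharges} rules out any other split charge involving $t$ or $t'$. Thus the total full-plus-split charge to $\{t, t'\}$ equals $w_f + w_j + X$, where $X$ is the single full charge that $t'$ may still receive, and the task reduces to showing $w_f + w_j + X \le R(w_f + w_{f'})$.

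For the close case, I first argue $X \le w_g$: the only way step $t+1$ can receive a full charge is via Rule~\ref{ch:back} from OPT's packet at $t+2$, because OPT's packet at $t+1$ is $f$ and was already consumed by the back charge to $t$; when such a Rule~\ref{ch:back} does fire, the contribution is exactly $w_g$. I then split on which clause of Rule~\ref{ch:split} triggered the split charge. If $2\alpha w_j < w_f + w_g$, equation~(\ref{eq:splitCharges}) rewrites this as $w_j < (R-1)(w_f + w_g)$, and even with $X = w_g$ we get $w_f + w_j + w_g < R(w_f + w_g)$. If only the second clause triggered, then $g$ does not receive a full back charge so $X = 0$, and combining Lemma~\ref{obs:splitCharges}(c)'s bound $w_j \le w_{p_1}$ with the clause's hypothesis $2\alpha(w_{p_1} - w_g) < w_f + w_g$ yields $w_j < w_g + (R-1)(w_f + w_g)$, from which $w_f + w_j < R(w_f + w_g)$ follows.

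For the distant case, Lemma~\ref{l:distributeCharge} immediately gives $2\alpha w_j < w_f + w_h$, i.e., $w_j < (R-1)(w_f + w_h)$. The charge $X$ at step $t+2$ is at most $w_h$: if it is a full back charge from OPT's packet at $t+3$, that packet must equal $h$ and contributes exactly $w_h$; if it is a full up charge from some packet scheduled by OPT at $t+2$, Rule~\ref{ch:up}'s condition forces its weight to be at most $w_h$. The desired bound $w_f + w_j + X \le w_f + w_j + w_h < R(w_f + w_h)$ is then immediate.

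The main obstacle I expect is recognizing that, in the close case, a full up charge at $t+1$ is impossible: such a charge would come from OPT's packet $f$ at $t+1$, which looks innocuous but cannot fire because the ``first applicable rule'' convention of the charging scheme means Rule~\ref{ch:back} has already handled $f$. Without this observation one is forced into subcases where the combined bound appears to exceed $R(w_f + w_g)$; with it, everything collapses cleanly to the identity $R = 1 + 1/(2\alpha)$ from equation~(\ref{eq:splitCharges}).
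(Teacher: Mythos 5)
Your proposal is correct and follows essentially the same approach as the paper: both rely on Lemma~\ref{obs:splitCharges}(b) to pin down the back charge to $t$ and to rule out a full up charge at $t+1$ (which you correctly attribute to the ``first applicable rule'' convention), on Lemma~\ref{l:distributeCharge} for the distant case, and on the identity~(\ref{eq:splitCharges}). The only difference is cosmetic: you split cases by which clause of Rule~\ref{ch:split} fired, whereas the paper splits by whether $t'$ receives a full back charge, but the resulting inequalities are the same.
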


\begin{proof}
Each of steps $t$ and $t'$ may receive a full charge,
but each step at most one full charge from a packet of smaller or equal weight
by Lemma~\ref{l:oneFullCharge} and charging rules.

If we use a distant split charge or if step $t'$ gets a full back charge, then
$2\alpha\scdot w_j < w_f + w_{f'}$ by Lemma~\ref{l:distributeCharge} or Rule~\ref{ch:split}.
Thus the total of full and split charges to steps $t$ and $t'$
is upper bounded by 
\begin{equation*}
w_f +  w_{f'} + w_j < w_f +  w_{f'} + \frac{w_f +  w_{f'}}{2\alpha}
= \left(1 + \frac{1}{2\alpha}\right)\scdot (w_f +  w_{f'})= R\scdot (w_f +  w_{f'})
\end{equation*}
where we used~(\ref{eq:splitCharges}) in the last step.

Otherwise, i.e., if we use a close split charge and step $t'=t+1$ does
not get a full back charge, 
then we have $2\alpha\scdot (w_{p_1} - w_{f'}) < w_f + w_{f'}$ by Rule~\ref{ch:split}.
Since $d_j = t$ by Lemma~\ref{obs:splitCharges}(c), we have $w_j\le w_{p_1}$
and $2\alpha\scdot (w_j - w_{f'}) < w_f + w_{f'}$.
Also, step $t+1$ does not receive a full up charge by Lemma~\ref{obs:splitCharges}(b).
We thus bound the total of full and split charges to steps $t$ and $t+1$ by
\begin{equation*}
w_f +  w_j < w_f +  \frac{w_f +  (2\alpha + 1)\scdot w_{f'}}{2\alpha}
= \left(1 + \frac{1}{2\alpha}\right)\scdot (w_f +  w_{f'}) = R\scdot (w_f +  w_{f'})
\end{equation*}
using~(\ref{eq:splitCharges}) in the last step again.
\end{proof}
 


\myparagraph{Analysis of chain charges.}
We now analyze chaining steps starting with a lemma below consisting of several useful observations.
In particular, Part~(c) motivates the name ``chaining'' for such steps.


\begin{lemma}\label{obs:chainSteps}
If step $t$ is a chaining step, then the following holds:
\begin{compactenum}[(a)]
\item $d_j = t+1$,
\item $d_f = t$.
\end{compactenum}
Moreover, if step $t+1$ is also a chaining step, then
\begin{compactenum}
\item[(c)] $j$ is scheduled by the algorithm in step $t+1$, i.e., $g=j$,
\item[(d)] $2\alpha\scdot w_f \ge w_j+w_k$ (recall that $k$ is the packet
scheduled in step $t+1$ in \OPT).
\end{compactenum}
\end{lemma}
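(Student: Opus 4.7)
My plan is to first pin down the structure of the algorithm's plan at step $t$ under the chaining hypothesis, and then propagate this structure to step $t+1$.

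First I would handle (a) and (b) using the optimality of the plan together with swap arguments. For (a), if $d_j = t$ then $j$ is pending and expiring; either $j$ appears in the plan $p_1, p_2, p_3$ at step $t$, in which case $j = p_1$ (the only slot accepting a packet with $d_j = t$) and then $w_f \geq w_j$ contradicts chaining, or $j$ is absent, in which case replacing $p_1$ by $j$ (feasible because $d_j = t \geq t$) strictly improves the plan, contradicting its optimality. For (b), assuming $d_f = t+1$, I would split on whether $f = p_1$ or $f = p_2$; in both subcases, the canonical ordering and the algorithm's rule prevent $j$ from appearing in the plan, while simultaneously swapping some $p_i$ for $j$ strictly improves the plan. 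The upshot of (a) and (b) is that the plan at step $t$ has the form $p_1 = f$ with $d_f = t$, $p_2 = j$ with $d_j = t+1$ and $r_j = t$, and $p_3$ a lookahead packet with $r_{p_3} = t+1$, $d_{p_3} = t+2$; moreover, since the algorithm chose $p_1$ over $p_2$, one of $w_f \geq w_{p_3}$ or $w_f \geq (w_j + w_{p_3})/(2\alpha)$ holds.

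For (c), I would apply (a) and (b) to step $t+1$ (chaining by hypothesis), obtaining $d_k = t+2$, $d_g = t+1$, and the analogous plan structure at $t+1$ in which $g$ occupies the first slot of the plan. It then remains to show $g = j$; equivalently, that any candidate $x \ne j$ for the first slot, with $d_x = t+1$ and $r_x \in \{t, t+1\}$, satisfies $w_x < w_j$. If $r_x = t$, then $x$ was already pending at step $t$: either $x$ is in the plan at $t$, which by the structure above forces $x = p_2 = j$, contradicting $x \ne j$; or swapping $x$ into the plan for $p_1 = f$ strictly improves it unless $w_x < w_f$, and then $w_x < w_f < w_j$. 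If $r_x = t+1$, then $x$ is tight to slot $t+1$, forcing any feasible plan at $t$ containing $x$ to place $j$ in slot $t$; but then the algorithm's check $r_{p_2} = t$ fails (since $p_2 = x$ has $r_x = t+1$) and the algorithm schedules $j$, contradicting $f \ne j$. With $w_x < w_j$ for every competitor, a swap argument in the plan at $t+1$ forces $j$ into its first slot, so $g = j$.

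For (d), I would use that the algorithm's choice of $p_1 = f$ over $p_2 = j$ at step $t$ implies $w_f \geq w_{p_3}$ or $w_f \geq (w_j + w_{p_3})/(2\alpha)$. I would also note that $w_{p_3} \geq w_k$ (since $k$ is a valid candidate for slot $t+2$ at step $t$ and the plan prefers $p_3$) and $w_j < w_k$ (by chaining at $t+1$ together with $g = j$). In the second case, $2\alpha w_f \geq w_j + w_{p_3} \geq w_j + w_k$ immediately. In the first case, $w_f \geq w_{p_3} \geq w_k$, and since $w_j < w_k$ and $\alpha > 1$, we get $2\alpha w_f \geq 2\alpha w_k > w_j + w_k$.

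The hard part will be the lookahead-candidate case ($r_x = t+1$) in part (c): ruling out such an $x$ requires combining its tightness to slot $t+1$, the canonical-ordering constraint on the plan, and the algorithm's specific threshold test, so as to derive a contradiction with the chaining hypothesis at $t$. This is where the 1-lookahead discipline and the algorithm's decision rule interact most delicately.
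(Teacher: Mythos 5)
Your proposal is correct in substance, but it takes a noticeably different (and longer) route than the paper, most visibly in part~(c).

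For (a) and (b), the paper simply invokes the observation, stated right after the algorithm's pseudocode, that \LCalpha\ always schedules a packet at least as heavy as the heaviest expiring packet: since $w_f < w_j$, packet $j$ cannot be expiring, giving (a); and $f$ must itself be expiring, since otherwise $d_f = d_j$ with $w_j > w_f$ would force the algorithm to schedule $j$, giving (b). You re-derive these facts via swap arguments on the plan; this works but duplicates the short argument the paper has already factored out.

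The biggest divergence is in (c). The paper's proof is a single direct contradiction: if $g \ne j$, then since $j$ is expiring and pending at $t+1$, the scheduled packet $g$ satisfies $w_g > w_j$, and chaining at $t+1$ gives $w_k > w_g$; then $f,j,g,k$ are four distinct visible packets at time $t$ with $w_f < w_j < w_g < w_k$, $d_f=t$, $(r_j,d_j)=(t,t+1)$, and $g,k$ schedulable at $t+1$, from which one sees that no optimal plan at time $t$ can contain $f$ --- contradicting the fact that $f$ is scheduled. Your proof instead pins down the plan at $t$ (establishing $p_1 = f$, $p_2 = j$, and the implied inequality from the algorithm's test) and then rules out each possible competitor $x$ for the first slot of the plan at $t+1$, with a case split on $r_x$. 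That argument does go through, but it requires the reader to first accept the intermediate claim $p_2 = j$ (which you state as an "upshot" of (a), (b) but which actually needs its own swap argument), and the case $r_x=t+1$ is delivered rather telegraphically: the needed chain of implications is ``if $w_x \ge w_j$, then the plan at $t$ puts $x$ at slot $t+1$ (else swapping $j$ for $x$ improves), hence $p_2 = x$ has $r_{p_2}=t+1$, hence the algorithm schedules $p_1$, and $p_1$ must be $j$ since $j$ is in the plan and cannot sit at slot $t+2$, contradicting $w_f < w_j$.'' Your sketch compresses this and refers to a plan "containing $x$" placing $j$ in slot $t$, which needs unwinding. In exchange, your approach makes explicit which packets occupy which slots of both plans, which is structurally informative but not needed for the paper's shorter argument.

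For (d), the paper applies Lemma~\ref{l:scheduledTightPacketNotTooLight} with $a=f$, $b=j$, $c=k$, which yields the bound in one step. Your route goes via $w_{p_3} \ge w_k$ and a case split on which term of the $\min$ in the algorithm's test is violated; this is fine, though your first case ($w_f \ge w_{p_3}$) is actually vacuous because it would give $w_f \ge w_{p_3} \ge w_k > w_j > w_f$ --- not an error, since the inequality you derive there still holds, but worth noting. Overall: correct, but your version reproves, in less structured form, the two general-purpose facts the paper has already isolated (the ``heaviest-expiring'' observation and Lemma~\ref{l:scheduledTightPacketNotTooLight}), so it loses the economy that citing those would give.
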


\begin{proof}
Recall that Algorithm~\LCalpha never schedules a packet lighter than
the heaviest expiring packet.  As in step $t$ it schedules $f$ with
$w_f < w_j$ (by Rule~\ref{ch:chain} for chain charges), (a) follows.
Furthermore, it follows that $f$ is expiring in step $t$,
because otherwise the algorithm would schedule $j$, since both would have the same deadline and $j$ is heavier.
Thus (b) holds as well.

Now assume that step $t+1$ is also in the chain and
for a contradiction suppose that $g\neq j$.
Since $j$ is expiring and pending for the algorithm in step $t+1$,
we have $w_g > w_j$ and $w_k > w_g$ as step $t+1$ is in the chain.

Summarizing, the algorithm sees all packets $f,j,g,k$ in step $t$ (some are pending and some may be lookahead packets), 
and they are all distinct packets with $w_f < w_j < w_g < w_k$, $d_f = t$, $(r_j,d_j) = (t,t+1)$, and both
$g$ and $k$ can be feasibly scheduled at time $t+1$. Thus, independently of the release times and deadlines of
$g$ and $k$, the plan at time $t$ containing $f$ would not be optimal -- a contradiction.
This proves that (c) holds.

Finally, we show (d).
Since $f$ is expiring in step $t$ by (b) and both $j$ and $k$ are considered for the plan at time $t$
and satisfy $(r_j, d_j) = (t, t+1)$, $(r_k, d_k) = (t+1, t+2)$, $w_f < w_j < w_k$,
we use Lemma~\ref{l:scheduledTightPacketNotTooLight} with $a=f, b=j,$ and $c=k$
and get the inequality in (d).
\end{proof}

First we show that chaining steps does not receive charges of other types.


\begin{lemma}\label{l:noFullorSplitChargeInAChain}
If step $t$ is a chanining step, then $t$ does not receive a full charge or a split charge.
\end{lemma}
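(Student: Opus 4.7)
The plan is to rule out, one by one, each possible charge that step $t$ could receive, using the chaining hypothesis $w_f<w_j$ together with Lemma~\ref{obs:chainSteps}, and to invoke Lemmas~\ref{obs:splitCharges} and~\ref{l:distantSplitCh:g<p3} to handle split charges coming from the past.

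First, two of the five sources are immediate. Rule~\ref{ch:up} (full up charge from $j$) requires $w_f\ge w_j$, and Rules~\ref{ch:split} and~\ref{ch:distant} (split charges from $j$) require $w_f> w_j$; both contradict the defining property $w_f< w_j$ of a chaining step. Next, Rule~\ref{ch:back} would only produce a full back charge at $t$ if OPT scheduled $f$ at $t+1$, but Lemma~\ref{obs:chainSteps}(b) gives $d_f=t$, so this is impossible. A close split charge from the OPT packet $j_{-1}$ at step $t-1$ to the pair $(t-1,t)$ would, via Lemma~\ref{obs:splitCharges}(b) applied at $t-1$, force $e$ (the ALG packet at $t-1$) to be scheduled by OPT at $t$, so $j=e$; but $e$ is not pending for ALG at $t$, contradicting the chaining hypothesis.

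The remaining and main case is a distant split charge from the OPT packet $j_{-2}$ at step $t-2$ to the pair $(t-2,t)$. I would proceed as follows. Applying Lemmas~\ref{obs:splitCharges} and~\ref{l:distantSplitCh:g<p3} at $t-2$, one obtains that the plan at $t-2$ has the form $p_1^{t-2},p_2^{t-2},p_3^{t-2}$ with $r_{p_3^{t-2}}=t-1$, $d_{p_3^{t-2}}=t$, that $f_{t-2}=p_2^{t-2}$, that OPT schedules $p_2^{t-2}$ at $t-1$, and that $d_e=t-1$ with $w_e<w_{p_3^{t-2}}$. In particular the algorithm's swap condition at $t-2$ gives $2\alpha\, w_{p_1^{t-2}}<w_{p_2^{t-2}}+w_{p_3^{t-2}}$, while the failure of the close split conditions of Rule~\ref{ch:split} (using that $e$ cannot receive a full back charge, by the same $j=e$ argument as above) yields the upper bound $w_e\le(2\alpha\, w_{p_1^{t-2}}-w_{p_2^{t-2}})/(2\alpha+1)$.

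Next, I would observe that $p_3^{t-2}$ is still pending for ALG at $t$ (not scheduled at $t-2$ because $r_{p_3^{t-2}}=t-1$, not at $t-1$ since $e\ne p_3^{t-2}$ by the weight inequality). The chaining hypothesis forces $r_j=t$ and $d_j=t+1$ (Lemma~\ref{obs:chainSteps}(a) plus 2-boundedness), so $j$ is a lookahead packet at $t-1$ and therefore appears in ALG's plan at $t-1$ alongside $p_3^{t-2}$. The critical step is to analyze this plan at $t-1$: since $e\ne p_3^{t-2}$, the swap at $t-1$ did not assign $p_3^{t-2}$, which translates into a lower bound of the form $w_e\ge(w_{p_3^{t-2}}+w_j)/(2\alpha)$ (obtained either via Lemma~\ref{l:scheduledTightPacketNotTooLight} applied at $t-1$ with $a=e$, $b=p_3^{t-2}$, $c=j$, or by directly inspecting which term of the min in the swap condition is violated). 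Combining this lower bound with the upper bound from the preceding paragraph, and using the swap inequality at $t-2$ to estimate $2\alpha\, w_{p_1^{t-2}}-w_{p_2^{t-2}}<w_{p_3^{t-2}}$, gives $w_e<w_{p_3^{t-2}}/(2\alpha+1)$ while $w_e\ge w_{p_3^{t-2}}/(2\alpha)$, a numerical contradiction since $2\alpha+1>2\alpha$.

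The routine steps are the case eliminations in paragraph one; the main obstacle is the distant-split case, specifically verifying the lower-bound inequality on $w_e$. This requires showing that any other plan structure at $t-1$ (for instance when several packets compete for the tight slot or when $p_3^{t-2}$ is dropped from the plan) either collapses to the same $w_e$-bound or forces ALG at $t$ to schedule $j$ itself, contradicting chaining. This is the delicate part because it requires tracking the interaction of the plan's optimality with the swap rule across three consecutive steps.
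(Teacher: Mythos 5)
Your first paragraph (ruling out a full up charge, a split charge emanating from $t$, a full back charge, and a close split charge from $t-1$) exactly matches the paper's argument. For the distant split charge you take a genuinely different route. The paper closes this case with a short exchange argument: with $p_1,p_2,p_3$ the plan at $t-2$, it observes from Lemma~\ref{obs:splitCharges}(c),(d) that $w_{p_1}\ge w_x$ and $w_{p_1}<w_{p_3}$; then, since $d_{p_3}=t$ while $j$ is not expiring (Lemma~\ref{obs:chainSteps}(a)) and $p_2$ occupies slot $t-1$ of \OPT, $p_3$ is nowhere in \OPT, so replacing $x$ at $t-2$ and $p_2$ at $t-1$ by $p_2$ at $t-2$ and $p_3$ at $t-1$ strictly improves \OPT — contradiction. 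You instead chase weights across steps $t-2,t-1,t$: from the non-applicability of Rule~\ref{ch:split} you derive $w_e\le(2\alpha w_{p_1}-w_{p_2})/(2\alpha+1)<w_{p_3}/(2\alpha+1)$, and from the plan at $t-1$ you want $2\alpha w_e\ge w_{p_3}+w_j$, which is incompatible. This works, but it is heavier machinery for the same conclusion.

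The place you flag as delicate is indeed the one missing link, but it is fillable with one line rather than a further case analysis. To invoke Lemma~\ref{l:scheduledTightPacketNotTooLight} at $t-1$ with $(a,b,c)=(e,p_3^{t-2},j)$ you need $w_e<\min(w_{p_3^{t-2}},w_j)$; you already have $w_e<w_{p_3^{t-2}}$ from Lemma~\ref{l:distantSplitCh:g<p3}, and the second inequality follows because $p_3^{t-2}$ is pending and expiring at $t$ (it cannot have been scheduled at $t-2$ or $t-1$), so $w_f\ge w_{p_3^{t-2}}$ since the algorithm never schedules below the heaviest expiring packet, whence $w_e<w_{p_3^{t-2}}\le w_f<w_j$ by the chaining hypothesis. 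With that in hand, Lemma~\ref{l:scheduledTightPacketNotTooLight} directly gives $2\alpha w_e\ge w_{p_3^{t-2}}+w_j$ regardless of any competing packets in the plan at $t-1$ — the lemma already proves the plan must be $e,p_3^{t-2},j$ — so the worry about ``any other plan structure at $t-1$'' is unfounded. Once this is noted your argument closes correctly, though the paper's exchange argument remains the cleaner proof.
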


\begin{proof}
By Lemma~\ref{obs:chainSteps}, $f$ is expiring, thus step $t$ does not receive a full back charge.
As $w_j>w_f$, the step also does not get a full up charge or a split charge from step $t$.
So it remains to show that $f$ does not receive a split charge.

First observe that step $t$ cannot receive a close split charge from step $t-1$
in \OPT, because $j$ is pending in step $t$ in \ALG, while
Lemma~\ref{obs:splitCharges}(b) states that a split charge 
from step $t-1$  would require $j$ to be scheduled at time $t-1$ in \ALG.

Finally, we show that step $t$ does not receive a distant split charge.
For a contradiction, suppose that step $t$ receive a distant split charge from the packet $x$
scheduled in step $t-2$ in \OPT. Let $p_1, p_2, p_3$ be the plan in step $t-2$. 
According to Lemma~\ref{obs:splitCharges}(d) and (b), $p_2$ is scheduled
in step $t-2$ in \ALG and in step $t-1$ in \OPT.
Moreover, by Lemma~\ref{obs:splitCharges}(c),
$x$ is pending and expiring in step $t-2$ and $w_{p_1}\ge w_x$.
As the algorithm scheduled $p_2$ in step $t-2$ we get $r_{p_2} = t-2$ and $w_{p_1} < w_{p_3}$. 

Observe that $p_3$ is not scheduled in \OPT, since it is expiring in step $t$
and $j$ is not expiring, by Lemma~\ref{obs:chainSteps}(a). Thus we could increase the weight of \OPT
if we scheduled $p_2$ in step $t-2$ instead of $x$ and $p_3$ in step $t-1$.
This contradicts the optimality of \OPT.
\end{proof}


We now analyze how much charge does each chaining step get.

\begin{lemma}\label{l:chargingToChain}
If step $t$ is a chaining step,
then it receives a charge of at most $R\scdot w_f$.
\end{lemma}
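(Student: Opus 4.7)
The plan is to first invoke Lemma~\ref{l:noFullorSplitChargeInAChain} to rule out full and split charges, so that every charge received by step $t$ comes from Rule~\ref{ch:chain}. Then I case-split on where $t$ sits in its chain---singleton, beginning, interior, or end of a long chain---and in each case enumerate which chain rules applied at steps $t-1$, $t$, $t+1$ in \OPT deposit weight on step $t$. The key external inputs are Lemma~\ref{obs:chainSteps}(c), which identifies the \ALG-packet of the next step inside a chain, and Lemma~\ref{obs:chainSteps}(d), which gives $w_j+w_k\le 2\alpha w_f$ whenever $t$ and $t+1$ are both chaining.

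For the singleton case, both $t-1$ and $t+1$ are non-chaining; inspection of Rules~\ref{ch:chainSingle}--\ref{ch:chainInside} shows that any chain charge landing on $t$ from a neighbouring \OPT-step would force that neighbour to be a chaining step, a contradiction. Hence the only charge is the up charge $\min(w_j,R\scdot w_f)\le R\scdot w_f$ from Rule~\ref{ch:chainSingle}. For the end of a long chain, $t+1$ is non-chaining so again contributes nothing, while step $t-1$ contributes a forward charge $(1-2\delta)w_{j_{t-1}}$ (from Rule~\ref{ch:chainBeginning} if the chain has length $2$, or~\ref{ch:chainInside} otherwise) and step $t$ itself contributes $(R-1+2\delta)w_f$ via Rule~\ref{ch:chainEnd}. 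Applying Lemma~\ref{obs:chainSteps}(c) at step $t-1$ yields $j_{t-1}=f$, so the total is exactly $R\scdot w_f$.

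For $t$ interior to a long chain we additionally pick up $\delta w_j$ from Rule~\ref{ch:chainInside} at $t$ and $\delta w_{j_{t+1}}$ from step $t+1$ (Rule~\ref{ch:chainInside} or~\ref{ch:chainEnd}). Using $w_{j_{t-1}}=w_f$ again, the total is $(1-2\delta)w_f+\delta(w_j+w_{j_{t+1}})$, which by Lemma~\ref{obs:chainSteps}(d) at $t$ is at most $(1-2\delta+2\alpha\delta)w_f=R\scdot w_f$ by equation~(\ref{eq:chainCharges}).

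The subtle case is the beginning of a long chain, because $t-1$ is non-chaining and I can no longer identify $w_{j_{t-1}}$ with $w_f$. The charges to $t$ are $2\delta w_j$ from Rule~\ref{ch:chainBeginning} and $\delta w_{j_{t+1}}$ from step $t+1$. To bound the sum, I first observe that Lemma~\ref{obs:chainSteps}(c) at step $t$ gives $g=j$, and since $t+1$ is chaining, $w_j=w_g<w_{j_{t+1}}$; combined with Lemma~\ref{obs:chainSteps}(d) this forces $w_j<\alpha w_f$. Then
\[
2\delta w_j+\delta w_{j_{t+1}}
=\delta w_j+\delta(w_j+w_{j_{t+1}})
\le \delta\cdot\alpha w_f+\delta\cdot 2\alpha w_f
=3\alpha\delta w_f
<R\scdot w_f
\]
by inequality~(\ref{ineq:chainBegCh}), which is exactly the inequality that was included among the constant constraints for this purpose. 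This beginning case is the only one where I cannot conclude via a direct identity, and it is therefore the main obstacle.
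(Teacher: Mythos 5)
Your proof is correct and follows essentially the same approach as the paper's: rule out non-chain charges via Lemma~\ref{l:noFullorSplitChargeInAChain}, then case-split on the position of $t$ in its chain, using Lemma~\ref{obs:chainSteps}(c) to identify $j_{t-1}=f$ and Lemma~\ref{obs:chainSteps}(d) plus the constant identities/inequalities to close each case. The only cosmetic difference is that you split directly into four cases (singleton, beginning, interior, end) where the paper groups them as last-step-of-chain versus not, and in the chain-beginning case you bound $w_j<\alpha w_f$ and $w_j+w_k\le 2\alpha w_f$ separately rather than using $2\delta w_j+\delta w_k\le \frac{3}{2}\delta(w_j+w_k)$, but both routes reach the same bound $3\alpha\delta\scdot w_f<R\scdot w_f$.
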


\begin{proof}
By Lemma~\ref{l:noFullorSplitChargeInAChain}, step $t$ does not receive any full or 
split charges; therefore we just need to prove that the total of chain charges
to step $t$ does not exceed $R\scdot w_f$.

\smallskip
\noindent
\mycase{1} $t$ is the last step of a chain.
If $t$ is the only step in the chain then Rule~\ref{ch:chainSingle} 
implies directly that the charge to $t$ is at most $R\scdot w_f$. 
Otherwise, Lemma~\ref{obs:chainSteps}(c) implies that $f$ is scheduled in step $t-1$ in \OPT, and thus
the charge from step $t-1$ is $(1-2\delta)\scdot w_f$. The charge from step $t$
is at most $(R-1+2\delta)\scdot w_f$ by Rule~\ref{ch:chainEnd}.
So the total charge is at most $R\scdot w_f$.

\smallskip
\noindent
\mycase{2} $t$ is not the last step of a chain.
Since step $t+1$ is also in the chain,
by Lemma~\ref{obs:chainSteps}(c) we have that $j$ is scheduled in step $t+1$ in \ALG
and \OPT has a packet $k$ with $w_k>w_j$ in step $t+1$.
From Lemma~\ref{obs:chainSteps}(d) we know that $2\alpha\scdot w_f \ge w_j+w_k$.

There are two sub-cases. If $t$ is the first step of the chain, then the charge to $t$
is at most
\begin{equation*}
2\delta\scdot w_j + \delta\scdot w_k \le \threehalves \delta\scdot (w_j+w_k)
					\le 3\alpha\delta\scdot w_f 
					< R\scdot w_f,
\end{equation*}
where the last inequality follows from~(\ref{ineq:chainBegCh}).
Otherwise, using Lemma~\ref{obs:chainSteps}(c), $f$ is scheduled in step $t-1$ in \OPT,
so the total charge to step $t$ is at most
\begin{equation*}
(1-2\delta)\scdot w_f + \delta\scdot w_j + \delta\scdot w_k \le
  (1-2\delta)\scdot w_f + 2\alpha\delta\scdot w_f = R\scdot w_f
\end{equation*}
where the last equality follows from~(\ref{eq:chainCharges}).
\end{proof}


\myparagraph{Analysis of forward charges from chains.}
We now show that a forward charge from a chain does not cause
an overload on the step just after the chain
which may also get both a full charge and a split charge.
(This is the only case when a step receives charges of all three types.)

For the following lemmas we assume that step $t-1$ is a chaining
step. Recall that $i$ is 
the packet scheduled in step $t-1$ in \OPT and $e$ is the packet
scheduled in step $t-1$ in \ALG. First, we prove some useful observations.

\begin{lemma}\label{l:fwdChargeFromChain}
If step $t$ receives a forward charge from a chain, then the following holds
\begin{compactenum}[(a)]
\item $j\neq e$ (that is, $j$ is not charged using a full back charge),
\item $w_f\geq w_j$,
\item $w_f \geq w_i$.
\end{compactenum}
Moreover, if step $t$ is not in a split-charge pair:
\begin{compactenum}
\item[(d)] $j$ is charged using a full up charge to step $t$,
\item[(e)] step $t$ does not receive a back charge.
\end{compactenum}
\end{lemma}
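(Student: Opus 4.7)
The plan is to exploit the deadline structure forced by step $t-1$ being a chaining step together with $2$-boundedness. I would first apply Lemma~\ref{obs:chainSteps} to step $t-1$ in the role of the chaining step, obtaining $d_i = t$ and $d_e = t-1$. The chaining-step condition $w_e < w_i$ also ensures that \ALG\ did not schedule $i$ at step $t-1$.

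For (a), $d_e = t-1$ while \OPT\ schedules $j$ at step $t$, so $d_j \ge t > d_e$ and $j \ne e$. For (c), since $d_i = t$ and $2$-boundedness forces $r_i = t-1$, packet $i$ is pending and expiring for \ALG\ at step $t$; the observation that \ALG\ never schedules a packet lighter than the heaviest expiring pending packet (stated right after the algorithm definition) then gives $w_f \ge w_i$. For (b), I would use $2$-boundedness to get $r_j \ge d_j - 1 \ge t-1$, so the earliest step at which \ALG\ could have scheduled $j$ is $t-1$; by (a), it did not, so $j$ is pending for \ALG\ at step $t$. Since $t-1$ is the last step of its chain, step $t$ is not a chaining step, and combined with $j$ being pending for \ALG\ at $t$, the definition of a chaining step forces $w_f \ge w_j$.

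For (d), assuming step $t$ is not in a split-charge pair, I would walk through the charging rules in order: Rule~\ref{ch:back} is excluded by (a); Rule~\ref{ch:chain} requires $w_f < w_j$ and is excluded by (b) (using distinct weights to cover the case $f=j$); Rules~\ref{ch:split} and~\ref{ch:distant} would place step $t$ in a split-charge pair, contradicting the assumption. Hence Rule~\ref{ch:up} applies, proving (d). Part (e) then follows immediately from Lemma~\ref{l:oneFullCharge}: the full up charge produced by (d) exhausts step $t$'s single-full-charge budget, so no (full) back charge can occur.

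The only subtle step is (b): it requires combining $2$-boundedness (to rule out \ALG\ scheduling $j$ before step $t-1$) with chain-maximality at $t-1$ (to prevent step $t$ from being chaining itself). The remaining parts are fairly direct consequences of Lemma~\ref{obs:chainSteps} and the priority order of the charging rules.
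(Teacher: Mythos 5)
Your proof is correct and follows essentially the same approach as the paper's: part (a) from $d_e = t-1$ (via Lemma~\ref{obs:chainSteps}(b)), part (c) from $d_i = t$, $i \neq e$, and the observation that the algorithm never schedules lighter than the heaviest expiring packet, part (b) from (a) and step $t$ not being chaining, and (d), (e) from the charging-rule priority and Lemma~\ref{l:oneFullCharge}. You merely spell out two steps the paper leaves implicit, namely the $2$-boundedness argument showing $j$ is pending for \ALG\ at $t$ (needed in (b)) and the explicit rule-by-rule elimination in (d), so no substantive difference.
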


\begin{proof}
Part~(a) holds because $e$ is expiring in step $t-1$, by Lemma~\ref{obs:chainSteps}(b).
Part~(b) follows from (a) and the fact that step $t$ is not chaining.

To show~(c), Lemma~\ref{obs:chainSteps}(a) implies that
$d_i = t$. Also, since $i\neq e$, $i$ is pending in \ALG in step $t$.
Now (c) follows, because each packet scheduled by the algorithm is at least
as heavy as the the heaviest expiring packet. 

Part~(d) follows from (a) and (b) and the assumption that $t$ is not in a split-charge pair.
Part~(e) follows from (d) and Lemma~\ref{l:oneFullCharge}.
\end{proof}

Note that $f$ may be the same packet as $i$ or $j$. We start with the case in which $f$ is not
in a split-charge pair.

\begin{lemma}\label{l:fwdChFromChain-NoSplitCh}
If step $t$ receives a forward charge from a chain $C$
and $t$ is not in a split-charge pair,
then the total charge to step $t$ is at most $R\scdot w_f$.
\end{lemma}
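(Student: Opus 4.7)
The plan is to first invoke Lemma~\ref{l:fwdChargeFromChain}: since $t$ is not in a split-charge pair, parts~(d) and~(e) of that lemma imply that the only charges to step $t$ are the full up charge $w_j$ (from the packet $j$ scheduled by $\OPT$ at step $t$) together with a forward charge from the chain $C$ ending at step $t-1$. Write $i$ and $e$ for the packets scheduled at step $t-1$ by $\OPT$ and by $\ALG$, respectively. I would split into two cases depending on whether $C$ is a singleton chain or a long chain ending at $t-1$.

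In the singleton case (Rule~\ref{ch:chainSingle}) the forward charge equals $\max(0, w_i - R\scdot w_e)$; if this is zero, then the total is $w_j \leq w_f \leq R\scdot w_f$ and we are done, and otherwise the total is $w_j + w_i - R\scdot w_e$, which is at most $2 w_f - R\scdot w_e$ by parts~(b) and~(c) of Lemma~\ref{l:fwdChargeFromChain}. In the long-chain case (Rule~\ref{ch:chainEnd}) the same two bounds give a total of at most $(2-\delta)\scdot w_f - (R-1+2\delta)\scdot w_e$. Using~(\ref{eq:forwardCh}), which can be rewritten as $2-\delta-R = (R-1+2\delta)/\alpha$, we see that in the long-chain case a lower bound $w_e \geq w_f/\alpha$ already yields total $\leq R\scdot w_f$; using~(\ref{ineq:fwdChFromSingleton}), which gives $(2-R)/R < 1/\alpha$, the same lower bound suffices in the singleton case. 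So both cases reduce to one inequality.

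Thus the main obstacle is establishing $w_e \geq w_f/\alpha$, or equivalently $2\alpha\scdot w_e \geq w_i + w_j$ (using $w_i, w_j \leq w_f$ from Lemma~\ref{l:fwdChargeFromChain}). My intended route is to apply Lemma~\ref{l:scheduledTightPacketNotTooLight} at time $t-1$ with $a = e$, $b = i$, $c = j$: Lemma~\ref{obs:chainSteps}(b) supplies $d_e = t-1$, Lemma~\ref{obs:chainSteps}(a) supplies $d_i = t$, and the chaining condition at $t-1$ gives $w_e < w_i$. The delicate points are verifying $r_i = t-1$, that $(r_j, d_j) = (t, t+1)$ so that $j$ is legitimately a lookahead packet at step $t-1$, and the inequality $w_e < w_j$. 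The release times and deadline of $j$ should follow from the canonical ordering of $\OPT$ combined with the fact that $t-1$ is the last step of $C$ while $t$ is not chaining (preventing $j$ from being pending too early). The subcase $w_j \leq w_e$ has to be handled separately; in that case one uses $w_j \leq w_e$ directly in the bounds above, replacing $w_j$ by $w_e$ to shrink the total charge below $R\scdot w_f$ without recourse to Lemma~\ref{l:scheduledTightPacketNotTooLight}. These auxiliary verifications, rather than the algebra, are the technical heart of the argument.
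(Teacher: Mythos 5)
Your overall plan matches the paper's: invoke Lemma~\ref{l:fwdChargeFromChain} to isolate the two charges, split on singleton versus long chain, split on the comparison of $w_j$ and $w_e$, and in the hard case apply Lemma~\ref{l:scheduledTightPacketNotTooLight} at time $t-1$ with $a=e$, $b=i$, $c=j$. However, there is an algebraic gap at the step where you assert that $w_e \geq w_f/\alpha$ is ``equivalently'' $2\alpha\scdot w_e \geq w_i + w_j$ given $w_i, w_j \leq w_f$. Those two statements are not equivalent: $2\alpha\scdot w_e \geq w_i + w_j$, which is all Lemma~\ref{l:scheduledTightPacketNotTooLight} can give you, is strictly weaker than $w_e \geq w_f/\alpha$ whenever $w_i$ or $w_j$ is strictly smaller than $w_f$, and the weaker inequality is not enough for your bounds. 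Concretely, take $w_i = w_j = w_f/2$ and $w_e = (w_i+w_j)/(2\alpha) = w_f/(2\alpha)$ (the chaining condition $w_e < w_i$ holds since $\alpha>1$); then your long-chain estimate $(2-\delta)\scdot w_f - (R-1+2\delta)\scdot w_e$ evaluates, using $(R-1+2\delta)/\alpha = 2-\delta-R$, to $\tfrac{1}{2}(2-\delta+R)\scdot w_f$, which exceeds $R\scdot w_f$ because $2-\delta > R$ by~(\ref{ineq:2-R-2delta}).

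The repair is to reverse the order of the two substitutions. Starting from the exact total $w_j + (1-\delta)\scdot w_i - (R-1+2\delta)\scdot w_e$, first plug in $w_e \geq (w_i+w_j)/(2\alpha)$; this yields an expression linear in $w_i$ and $w_j$ with coefficients $1-\tfrac{R-1+2\delta}{2\alpha}$ and $1-\delta-\tfrac{R-1+2\delta}{2\alpha}$, both nonnegative (the second is exactly~(\ref{ineq:1-d-frac})), so only now may you replace $w_i$ and $w_j$ by $w_f$, giving $R\scdot w_f$ via~(\ref{eq:forwardCh}). The singleton case goes through analogously, ending at $(2-R/\alpha)\scdot w_f < R\scdot w_f$ via~(\ref{ineq:fwdChFromSingleton}). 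The auxiliary points you flag as delicate ($r_i=t-1$, $(r_j,d_j)=(t,t+1)$, and the easy subcase $w_j \leq w_e$, handled by replacing $w_e$ with $w_j$ in the negative term) are all sound and are indeed what the paper verifies.
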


\begin{proof}
The proof is by case analysis, depending on the relative weights of $j$ and $e$, and
on whether $C$ is a singleton or a long chain.
In all cases we use Lemma~\ref{l:fwdChargeFromChain} and the charging rules
to show upper bounds on the total charge.

\smallskip
\noindent
\mycase{1} $w_j < w_e$. 

\smallskip
\noindent
\mycase{1.1} The chain $C$ is long.
The charge to step $t$ is then at most
\begin{align}
w_j + (1-\delta)\scdot w_i - (R-1+2\delta)\scdot w_e 
	&< w_j + (1-\delta)\scdot w_i - (R-1+2\delta)\scdot w_j 
	\nonumber
	\\
	&= (2-R-2\delta)\scdot w_j + (1-\delta)\scdot w_i
	\nonumber
	\\
	&\le (2-R-2\delta)\scdot w_f + (1-\delta)\scdot w_f 
	\label{eqn: fwdChFromChain-NoSplitCh 1.1}
	\\
	&=	(3-R-3\delta)\scdot w_f = w_f\,.
	\label{eqn: fwdChFromChain-NoSplitCh 1.2}
\end{align}
To justify inequality~(\ref{eqn: fwdChFromChain-NoSplitCh 1.1}), note that
$2-R-2\delta\ge 0$ by~(\ref{ineq:2-R-2delta}) and
$1-\delta\ge 0$ by the choice of $\delta$, so 
we can apply inequalities $w_j\le w_f$ and
$w_i\le w_f$ from Lemma~\ref{l:fwdChargeFromChain}(b) and (c).
The last step~(\ref{eqn: fwdChFromChain-NoSplitCh 1.2}) follows from equation~(\ref{eq:2-R-3delta}).

\smallskip
\noindent
\mycase{1.2} The chain $C$ is singleton.
We assume that $w_i > R\scdot w_e$,
otherwise there is no forward charge from the chain.
Then the charge to step $t$ is
\begin{equation*}
	w_j + w_i - R\scdot w_e 
		\le	w_j + w_i - R\scdot w_j 
 		\le w_i 
		\le w_f\,,
\end{equation*}
where in the last step we used Lemma~\ref{l:fwdChargeFromChain}(c).

\smallskip
\noindent
\mycase{2}
$w_j > w_e$.
We claim first that $j$ is not expiring in step $t$, that is $d_j = t+1$.
Indeed, if we had $d_j = t$, then in step $t-1$ the algorithm would have pending
packets $e$ and $i$, plus packet $j$ (pending or lookahead), that need to be
scheduled in slots $t-1$ and $t$. Since $w_e < w_i$ (because step
$t-1$ is a chaining step) and $w_e < w_j$ (by the case assumption),  
packet $e$ could not be in the plan in step $t-1$ which is a contradiction.
Thus $d_j = t+1$.

Recall that $e$ is expiring in step $t-1$ by Lemma~\ref{obs:chainSteps}(b)
and both $i$ and $j$ are considered for the plan in step $t-1$.
Moreover, we know that $w_i > w_e$, $w_j > w_e$, $(r_i, d_i) = (t-1, t)$ (by Lemma~\ref{obs:chainSteps}(a)),
and $(r_j, d_j) = (t, t+1)$.
We thus use Lemma~\ref{l:scheduledTightPacketNotTooLight} for step $t-1$
with $a=e, b=i,$ and $c=j$, to get that $2\alpha\scdot w_e \ge w_i+w_j$.

\smallskip
\noindent
\mycase{2.1} The chain $C$ is long.
The charge to step $t$ is
\begin{align}
w_j + (1-\delta)\scdot w_i &- (R-1+2\delta)\scdot w_e 
	\nonumber
	\\
	&\le  w_j + (1-\delta)\scdot w_i - (R-1+2\delta)\scdot \frac{w_i + w_j}{2\alpha} 
	\nonumber
	\\
	&= \left(1-\frac{R-1+2\delta}{2\alpha}\right) \scdot w_j + \left(1-\delta-\frac{R-1+2\delta}{2\alpha}\right) \scdot w_i
	\nonumber
	\\
	&\le \left(1-\frac{R-1+2\delta}{2\alpha}\right) \scdot w_f + \left(1-\delta-\frac{R-1+2\delta}{2\alpha}\right) \scdot w_f 
	\label{eqn: fwdChFromChain-NoSplitCh 2.2}
	\\
	&= \left(2-\delta-\frac{R-1+2\delta}{\alpha}\right) \scdot w_f = R\scdot w_f.
	\label{eqn: fwdChFromChain-NoSplitCh 2.4}
\end{align}
To justify inequality~(\ref{eqn: fwdChFromChain-NoSplitCh 2.2}), we note that
$1-\delta-(R-1+2\delta)/(2\alpha)\ge 0$, by~(\ref{ineq:1-d-frac}), so we can
again apply inequalities $w_j\le w_f$ and  $w_i\le w_f$ from
Lemma~\ref{l:fwdChargeFromChain}(b) and (c).
In the last step~(\ref{eqn: fwdChFromChain-NoSplitCh 2.4})
we used equation~(\ref{eq:forwardCh}).

\smallskip
\noindent
\mycase{2.2} The chain $C$ is singleton.
We assume that $w_i > R\scdot w_e$,
otherwise there is no forward charge from the chain. Then the charge to step $t$ is
\begin{align}
w_j + w_i - R\scdot w_e &\le
				w_j + w_i - R\scdot \frac{w_i + w_j}{2\alpha} 
				\nonumber
				\\
				&=\left(1-\frac{R}{2\alpha}\right)\scdot (w_i+w_j)
				\nonumber
				\\
				&\le \left(1-\frac{R}{2\alpha}\right)\scdot (2w_f) 
				\label{eqn: fwdChFromChain-NoSplitCh 3.3}
				\\
				&= \left(2-\frac{R}{\alpha}\right)\scdot w_f < R\scdot w_f
				\label{eqn: fwdChFromChain-NoSplitCh 3.4}
\end{align}
Inequality~(\ref{eqn: fwdChFromChain-NoSplitCh 3.3}) is valid, because
$w_i\le w_f$ and $w_j\le w_f$, by Lemma~\ref{l:fwdChargeFromChain},
and $1-R/(2\alpha)\ge 0$ by~(\ref{ineq:1-R/2alpha}).
In step~(\ref{eqn: fwdChFromChain-NoSplitCh 3.4})
we used~(\ref{ineq:fwdChFromSingleton}).
\end{proof}

We now analyze how the forward charge from a chain combines with split charges.
First we observe that only the first step from a split-charge pair
may receive a forward charge from a chain.

\begin{lemma}\label{l:distributeChargeAndFwdChTo2ndStep}
If $j$ is charged using a split charge to a pair of steps $t$ and $t'$ (where $t'$ is $t+1$ or $t+2$),
then $t'$ does not receive a forward charge from a chain.
\end{lemma}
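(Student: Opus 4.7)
The plan is to observe that a forward charge from a chain lands on the step immediately after the last step of the chain, so if $t'$ receives such a forward charge then the step $t'-1$ must itself be a chaining step (the last step of the chain). I then rule out each possibility separately, according to whether the split charge to the pair $(t,t')$ is close ($t'=t+1$) or distant ($t'=t+2$).

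For the close case, $t'-1 = t$, so we would need $t$ itself to be a chaining step. But $t$ receives a split charge by assumption, while Lemma~\ref{l:noFullorSplitChargeInAChain} states that a chaining step receives no full or split charge. This contradiction handles the close split charge in one line.

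For the distant case, $t'-1 = t+1$, so we would need $t+1$ to be a chaining step. By the definition of chaining this requires the adversary's packet at time $t+1$, namely $k$, to be pending in \ALG\ at time $t+1$. However, Lemma~\ref{obs:splitCharges}(b) guarantees $k = f$ when $j$ receives a split charge at step $t$, and $f$ is by definition the packet that \ALG\ schedules in step $t$. Thus $f = k$ has already been transmitted by \ALG\ before step $t+1$ and is not pending at $t+1$, contradicting the chaining condition. I expect this second case to be the main (and essentially only) technical point, but the argument is immediate once the identification $k=f$ from Lemma~\ref{obs:splitCharges}(b) is invoked; no additional case analysis on the plan at $t+1$ or on the weights $w_g, w_{p_3}$ appears to be necessary.
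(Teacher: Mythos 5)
Your proof is correct and takes essentially the same route as the paper. The paper's own proof is a one-liner — ``By Lemma~\ref{obs:splitCharges}(b) we have $k=f$, which implies that steps $t$ and $t+1$ are not chaining steps'' — which bundles both of your cases together: step $t+1$ is not chaining because $k=f$ has already been transmitted by \ALG\ at step $t$ (your distant case), and step $t$ is not chaining because a split charge to $j$ requires $w_f>w_j$, contradicting the chaining condition $w_f<w_j$ (equivalent to your invocation of Lemma~\ref{l:noFullorSplitChargeInAChain} in the close case). Your version is just a more spelled-out presentation of the same argument.
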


\begin{proof}
By Lemma~\ref{obs:splitCharges}(b) we have $k=f$,
which implies that steps $t$ and $t+1$ are not chaining steps.
\end{proof}

\begin{lemma}\label{l:fwdAndDistributeCharge}
If $j$ is charged using a split charge to a pair of steps $t$ and $t'$,
$f'$ is the packet scheduled in $t'$ in \ALG, and 
step $t$ receives a forward charge from a chain $C$,
then the total charge to steps $t$ and $t'$ is
at most $R\scdot (w_f + w_{f'})$.
\end{lemma}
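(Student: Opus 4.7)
The plan is to reduce the statement to a case analysis that extends the argument of Lemma~\ref{l:fwdChFromChain-NoSplitCh}, now accounting for the split charge at $j$ instead of a full up charge. By Lemma~\ref{l:distributeChargeAndFwdChTo2ndStep}, step $t'$ receives no forward charge from any chain, so every charge at $t'$ is already handled by Lemma~\ref{l:splitChargeUB}. By Lemma~\ref{obs:splitCharges}(b), $k=f$, hence step $t$ always receives a full back charge $w_f$ from $k$ (in fact this is the ``all three charge types'' situation remarked upon before Lemma~\ref{l:fwdChargeFromChain}). Writing $F_C$ for the forward charge from chain $C$ to step $t$, the total charge to $\{t,t'\}$ is
\begin{equation*}
 w_f \;+\; F_C \;+\; w_j \;+\; (\text{full charge at } t'),
\end{equation*}
where the last summand is at most $w_{f'}$ and is present in exactly the subcases considered in Lemma~\ref{l:splitChargeUB}.

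Next I would split into cases as in Lemma~\ref{l:fwdChFromChain-NoSplitCh}: chain $C$ singleton vs.\ long, and the weight comparison $w_j$ vs.\ $w_e$. In each case I would bound $F_C$ using Lemma~\ref{l:fwdChargeFromChain}(b),(c), which gives $w_j\le w_f$ and $w_i\le w_f$; for long $C$ I would use $F_C=(1-\delta)(w_i-w_e)$ (via~(\ref{eq:2-R-3delta})), and for singleton $C$ I would use $F_C\le w_i-R\,w_e$. For the split charge I would apply either $2\alpha w_j<w_f+w_{f'}$ (from Lemma~\ref{l:distributeCharge} in the distant case, or from Rule~\ref{ch:split} in the close case when $t'$ gets a full back charge) or $2\alpha(w_j-w_{f'})<w_f+w_{f'}$ (close case without a full back charge at $t'$). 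The target inequality $R(w_f+w_{f'})$ then reduces in each case to an algebraic check that exactly matches one of the displayed manipulations in Lemma~\ref{l:fwdChFromChain-NoSplitCh}, combined with equation~(\ref{eq:splitCharges}) that governs split charges.

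The hard part is the case where $C$ is long and step $t'$ receives a full back charge, because then Lemma~\ref{l:splitChargeUB} is already tight (both $w_f$ and $w_{f'}$ are absorbed completely and $w_j$ is just below $(w_f+w_{f'})/(2\alpha)$), leaving no room at face value for the extra term $F_C$. To recover slack I would exploit a structural property of the chain's last step: since \LCalpha{} scheduled $e$ at $t-1$ rather than switching to the heavier plan-packet $i$, the switching test forces $w_e\ge (w_i+w_f)/(2\alpha)$, which tightens the bound to $F_C\le (1-\delta)\bigl((2\alpha-1)w_i-w_f\bigr)/(2\alpha)$. Because $w_i\le w_f$, a short calculation identifies this with $(R-1)w_f$ (the same identity $(1-\delta)(\alpha-1)/\alpha=1/(2\alpha)$ that drives equations~(\ref{eq:forwardCh}) and~(\ref{eq:splitCharges})), which is precisely the slack needed to absorb $F_C$ into $R(w_f+w_{f'})$. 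The remaining close and distant subcases either follow the pattern of Cases~2.1 and~2.2 of Lemma~\ref{l:fwdChFromChain-NoSplitCh}, or are simpler because $t'$ avoids the additional $w_{f'}$ summand.
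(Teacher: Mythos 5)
Your proposal identifies the right structural ingredients (step~$t'$ gets no forward charge; the full back charge $w_f$ to step~$t$; the split-charge bound on $w_j$), but the slack argument in your ``hard case'' does not close. The paper's proof does \emph{not} proceed by bounding $F_C$ alone by $(R-1)w_f$; the crux of the paper's argument is a bound on $w_i$ coming from the plan at time~$t$, which your proposal never invokes. Concretely: since $i$ is expiring and pending at time~$t$ (Lemma~\ref{obs:chainSteps}(a)), $w_i\le w_{p_1}$ for the plan $p_1,p_2,p_3$ at time~$t$; since the algorithm scheduled $p_2=f$ at time~$t$, we get $2\alpha w_{p_1}<w_f+w_{p_3}$; and $w_{p_3}\le w_{f'}$. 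This gives the key inequality $2\alpha w_i < w_f+w_{f'}$ (equation~(\ref{eq:fwdCh+SplitCh1})), which together with $w_j<w_e<w_i$ lets the entire combination $w_j+F_C$ telescope down to $w_i<(w_f+w_{f'})/(2\alpha)$.

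Your substitute argument is too lossy. You correctly note that the switching test at $t-1$ (via Lemma~\ref{l:scheduledTightPacketNotTooLight} with $a=e,b=i,c=f$, which is applicable because $w_e<w_i\le w_f$ and the release/deadline constraints hold) gives $2\alpha w_e\ge w_i+w_f$, and that combined with $w_i\le w_f$ and $R-1+2\delta=1-\delta$ this yields $F_C\le(R-1)w_f$. But this is not enough slack. In the hard case you also have the full back charge $w_f$, the full charge $w_{f'}$ to step $t'$, and the split charge bounded only by $w_j<(w_f+w_{f'})/(2\alpha)=(R-1)(w_f+w_{f'})$. Summing gives
\begin{equation*}
w_f+w_{f'}+(R-1)(w_f+w_{f'})+(R-1)w_f = R(w_f+w_{f'})+(R-1)w_f,
\end{equation*}
which overshoots the target by $(R-1)w_f$. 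The same overshoot (with a slightly smaller constant) occurs in the singleton-chain variant. What's missing is that $F_C$ and the split charge cannot be bounded independently against the same budget; the paper instead bounds their \emph{sum plus $w_j$} by $w_i$, and bounds $w_i$ by $(w_f+w_{f'})/(2\alpha)$ using the algorithm's decision at step~$t$. Without that second bound on $w_i$, the estimate does not close.
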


\begin{proof}
First we note that $j$ is expiring in step $t$ by Lemma~\ref{obs:splitCharges}(c)
and $f$ is not expiring in step $t$ by Lemma~\ref{obs:splitCharges}(b), so $f\neq i$.

We claim $w_j < w_e$. Indeed, if $w_j > w_e$, then in step $t-1$ the algorithm would have pending
packets $e$ and $i$, plus packet $j$ (pending or lookahead), that need to be
scheduled in slots $t-1$ and $t$. Since $w_e < w_i$ (because step
$t-1$ is a chaining step) and $w_e < w_j$,  
packet $e$ could not be in the plan in step $t-1$ which is a contradiction.
Therefore $w_j < w_e$.

Let $p_1, p_2, p_3$ be the plan at time $t$. 
We split the proof into two cases,
both having two subcases, one for long chains and one for singleton chains.

\smallskip
\noindent
\mycase{1}
$j$ is charged using a distant split charge or $f'$ gets a full back charge.

We claim that $2\alpha\scdot w_i < w_f + w_{f'}$. Indeed, since $i$ is expiring and pending in step $t$
by Lemma~\ref{obs:chainSteps}(a), we have $w_i\le w_{p_1}$.
As the algorithm scheduled $f=p_2$ by Lemma~\ref{obs:splitCharges}(d)
we get $2\alpha\scdot w_{p_1} < w_f + w_{p_3}$. To prove the claim it remains to show $w_f'\ge w_{p_3}$.

If $j$ is charged using a distant split charge, then by Lemma~\ref{l:distantSplitCh:g<p3}
we have $w_g < w_{p_3}$ and in particular, $g\neq p_3$. Thus $p_3$ is pending and expiring in step $t+2$,
hence $w_{p_3}\le w_{f'}$.
Otherwise, if $j$ is charged using a close split charge, then $f'=g$ gets a full back charge.
Hence $d_g = t+2$. Since also $d_{p_3}=t+2$ and the algorithm chooses the heaviest such packet,
we have $w_{p_3}\le w_{g}$.

The claim follows, since 
\begin{equation}\label{eq:fwdCh+SplitCh1}
2\alpha\scdot w_i \le 2\alpha\scdot w_{p_1} < w_f + w_{p_3} \le w_f + w_{f'}\,.
\end{equation}

\smallskip
\noindent
\mycase{1.1} The chain $C$ is long.
We upper bound the total charge to steps $t$ and $t'$ by
\begin{align}
w_f + w_{f'} + w_j + (1-\delta)\scdot w_i &- (R-1+2\delta)\scdot w_e 
			\nonumber
			\\
			&\le w_f + w_{f'} + (2-R-2\delta)\scdot w_e + (1-\delta)\scdot w_i 
			\nonumber
			\\
			&< w_f + w_{f'} + (2-R-3\delta)\scdot w_i + w_i 
			\label{eqn: fwdChFromChain-split 1.1.3}
			\\
			&= w_f + w_{f'} + w_i 
			\label{eqn: fwdChFromChain-split 1.1.4}
			\\
			&<w_f + w_{f'} + \frac{w_f + w_{f'}}{2\alpha}
			\label{eqn: fwdChFromChain-split 1.1.5}
			\\
			& = R\scdot (w_f + w_{f'})
			\nonumber
\end{align}
We can use $w_e<w_i$ in~(\ref{eqn: fwdChFromChain-split 1.1.3}), because $2-R-2\delta\ge 0$ by~(\ref{ineq:2-R-2delta}).
Equality~(\ref{eqn: fwdChFromChain-split 1.1.4}) follows from $2-R-3\delta = 0$ by~(\ref{eq:2-R-3delta})
and inequality~(\ref{eqn: fwdChFromChain-split 1.1.5}) from~(\ref{eq:fwdCh+SplitCh1}).
In the last step we use~(\ref{eq:splitCharges}).

\smallskip
\noindent
\mycase{1.2} The chain $C$ is singleton.
We suppose that $w_i > R\scdot w_e$,
otherwise there is no forward charge from the chain.
We upper bound the total charge to steps $t$ and $t'$ by
\begin{align}
w_f + w_{f'} + w_j + w_i - R\scdot w_e 
	&\le w_f + w_{f'} + (1-R)\scdot w_e + w_i 
	\nonumber
	\\
	&< w_f + w_{f'} + w_i 
	\nonumber
	\\
	&< w_f + w_{f'} + \frac{w_f + w_{f'}}{2\alpha} 
	\label{eqn: fwdChFromChain-split 1.2.3}
	\\
	&= R \scdot (w_f + w_{f'})\,,
	\nonumber
\end{align}
where we apply Equation~\ref{eq:fwdCh+SplitCh1} in~(\ref{eqn: fwdChFromChain-split 1.2.3}), and we use~(\ref{eq:splitCharges}) in the last step.

\smallskip
\noindent
\mycase{2}
$j$ is charged using a close split charge and $f' = g$ does not get a full back charge.
We have $2\alpha\scdot (w_{p_1}-w_g) < w_f + w_g$ by the definition of the close split charge.
Since $i$ is expiring and pending in step $t$
by Lemma~\ref{obs:chainSteps}(a), we have $w_i\le w_{p_1}$.
Hence $2\alpha\scdot (w_i-w_g) < w_f + w_g$.
This is equivalent to
\begin{equation}\label{eq:fwdCh+SplitCh2}
w_i < \frac{w_f + (2\alpha + 1)\scdot w_g}{2\alpha}\,.
\end{equation}

\smallskip
\noindent
\mycase{2.1} The chain $C$ is long.
We again suppose that $w_i > R\scdot w_e$,
as otherwise there is no forward charge from the chain.
The total charge to steps $t$ and $t'=t+1$ is
\begin{align}
w_f + w_j + (1-\delta)\scdot w_i &- (R-1+2\delta)\scdot w_e 
		\nonumber
		\\
		&\le w_f + (2-R-2\delta)\scdot w_e + (1-\delta)\scdot w_i 
		\nonumber
		\\
		&< w_f + (2-R-3\delta)\scdot w_i + w_i 
		\label{eqn: fwdChFromChain-split 2.1.3}
		\\
		&= w_f + w_i 
		\label{eqn: fwdChFromChain-split 2.1.4}
		\\
		&< w_f + \frac{w_f + (2\alpha + 1)\scdot w_{f'}}{2\alpha}
		\label{eqn: fwdChFromChain-split 2.1.5}
		\\
		&= w_f + w_{f'} + \frac{w_f + w_{f'}}{2\alpha} 
		\nonumber
		\\
		&= R \scdot (w_f + w_{f'})\,,
		\nonumber
\end{align}
We can use $w_e<w_i$ in~(\ref{eqn: fwdChFromChain-split 2.1.3}), because $2-R-2\delta\ge 0$ by~(\ref{ineq:2-R-2delta}).
Then we use $2-R-3\delta = 0$ by~(\ref{eq:2-R-3delta}) in~(\ref{eqn: fwdChFromChain-split 2.1.4}),
Equation~\ref{eq:fwdCh+SplitCh2} in~(\ref{eqn: fwdChFromChain-split 2.1.5}),
and Equation~\ref{eq:splitCharges} in the last step.

\smallskip
\noindent
\mycase{2.2} The chain $C$ is singleton.
We upper bound the total charge to steps $t$ and $t+1$ by
\begin{align}
w_f + w_j + w_i - R\scdot w_e 
	&\le w_f + (1-R)\scdot w_e + w_i 
	\nonumber
	\\
	&< w_f + w_i 
	\nonumber
	\\
	&< w_f + \frac{w_f + (2\alpha + 1)\scdot w_{f'}}{2\alpha}
	\label{eqn: fwdChFromChain-split 2.2.3}
	\\
	&=w_f + w_{f'} + \frac{w_f + w_{f'}}{2\alpha} 
	\nonumber
	\\
	&= R \scdot (w_f + w_{f'})\,,
	\nonumber
\end{align}
where we apply~(\ref{eq:fwdCh+SplitCh2}) in inequality~(\ref{eqn: fwdChFromChain-split 2.2.3}), and~(\ref{eq:splitCharges}) in the last step.
\end{proof}

We now summarize our analysis of \LCalpha.  If $t$ is not in a
split-charge pair, we show upper bounds on the total charge to step
$t$.  For each split-charge pair $(t,t')$, we show upper bounds on the
total charge to both steps $t$ and $t'$.  This is sufficient, since
split-charge pairs are pairwise disjoint by
Lemma~\ref{l:noTwoDistribCharges}, thus summing all the bounds gives
the result.

For each step $t$, we distinguish three cases according to whether 
$t$ is in a split-charge pair and whether $t$ is a chaining step.
In all cases, let $f$ be the packet scheduled at time $t$ in \ALG 
and let $j$ be the packet scheduled at time $t$ in \OPT.

\smallskip
\noindent
\mycase{1} Step $t$ is not chaining
and it is not in a split-charge pair.
Then $t$ receives at most one full charge from a packet $p$ such that $w_p \leq w_f$
(by Lemma~\ref{l:oneFullCharge} and charging rules)
and possibly a forward charge from a chain $C$;
then Lemma~\ref{l:fwdChFromChain-NoSplitCh} shows that the sum of a forward charge from a chain
and a full charge is at most $R\scdot w_f$.

\smallskip
\noindent
\mycase{2} Step $t$ is a chaining step.
Then it does not receive a split charge
or a full charge, by Lemma~\ref{l:noFullorSplitChargeInAChain}.
Lemma~\ref{l:chargingToChain} implies that step $t$ receives a charge of at most 
$R\scdot w_f$. 

\smallskip
\noindent
\mycase{3} $(t,t')$ is a split-charge pair, i.e., $t$ is the first
step of the split-charge pair and $t'=t+1$, or $t'=t+2$.  Thus $j$ is
charged using a split charge. Let $f'$ be the packet scheduled in step
$t'$ in \ALG.
 
By Lemma~\ref{l:distributeChargeAndFwdChTo2ndStep} step $t'$ does not
receive a forward charge from a chain.  If step $t$ also does not
receive a forward charge from a chain, then the total charge to steps
$t$ and $t'$ is at most $R\scdot (w_f + w_{f'})$ by
Lemma~\ref{l:splitChargeUB}.  Otherwise, step $t$ receives a forward
charge from a chain and we apply Lemma~\ref{l:fwdAndDistributeCharge}
to show that the total charge to steps $t$ and $t'$ is again at most
$R\scdot (w_f + w_{f'})$.

%
%
%


\section{A Lower Bound for 2-bounded Instances with Lookahead}\label{sec:lookaheadlb}



In this section, we prove that there is no online algorithm for
{\BDPS} with 1-lookahead that has competitive ratio smaller than
$\onefourth (1+\sqrt{17}) \approx 1.281$, even for $2$-bounded
instances.  The idea of our proof is somewhat similar to the proof of
the lower bound of $\phi$ for
{\BDPS}~\cite{hajek_unit_packets_01,andelman_queueing_policies_03,chin_partial_job_values_03}.


\begin{theorem}
Let $R = \onefourth (1 + \sqrt{17})$.
For each $\varepsilon > 0$, no deterministic online algorithm for {\BDPS} with 1-lookahead
can be $(R - \varepsilon)$-competitive, even for $2$-bounded instances.
\end{theorem}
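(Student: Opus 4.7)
The plan is to prove the lower bound by an adversary argument in the spirit of the classical $\phi$-lower bound for $\BDPS$ without lookahead~\cite{hajek_unit_packets_01,andelman_queueing_policies_03,chin_partial_job_values_03}, but carefully adapted to neutralize the one-step lookahead power of $\calA$. Fix $\varepsilon>0$ and suppose, for contradiction, that some deterministic online algorithm $\calA$ with 1-lookahead is $(R-\varepsilon)$-competitive on $2$-bounded instances. I will exhibit an adversary strategy producing a $2$-bounded instance on which $\calA$'s competitive ratio exceeds $R-\varepsilon$.

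The adversary releases packets round by round. At time $t$ it has already committed, one step earlier, to the packets released at time $t+1$; these are precisely the lookahead packets that $\calA$ sees at time $t$. Based on $\calA$'s choices at times $\le t$, the adversary commits to the packets released at time $t+2$ (visible to $\calA$ as lookahead at $t+1$). Each round introduces two packet weights: a ``tight'' packet expiring immediately and a packet extending to the next slot, with weights in ratios derived from $R$. In one branch of the strategy (when $\calA$ schedules what turns out to be the ``wrong'' packet at step~$t$) the adversary terminates the instance, leaving $\calA$ with too small a total weight compared to \OPT, which collects both packets. In the other branch, the adversary continues, releasing a heavier pair at $t+2$ and scaling up. The key balancing condition is $2R^2 - R - 2 = 0$, whose positive root is $R = \onefourth(1+\sqrt{17})$. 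After sufficiently many rounds the ratio extracted from the two branches converges to $R$, so for any $\varepsilon>0$ the construction, run long enough, forces a ratio exceeding $R - \varepsilon$.

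The main obstacle is the adversary's restricted foresight: unlike the no-lookahead setting, where the adversary can react to $\calA$'s step-$t$ decision when choosing step-$(t+1)$ packets (yielding the recurrence $R^2 = R+1$ and giving $\phi$), here the adversary's commitment must lag by an additional step. Concretely, when committing to the $(t+2)$-th packets the adversary knows $\calA$'s choice only through step $t$, not step $t+1$; it must therefore produce a single commitment that defeats \emph{both} possible reactions of $\calA$ at step $t+1$. This is precisely what shifts the governing equation from $R^2 = R+1$ to $2R^2 = R+2$. The bulk of the technical work is a case analysis over $\calA$'s two possible moves in each round, verifying that one branch of the adversary strategy applies in each case, and a careful bookkeeping of cumulative profits of $\calA$ and \OPT\ across the rounds to extract the ratio $R$ in the limit. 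An additional subtlety to check is the earliest-deadline property of \OPT\ in each branch, so that the comparison between $\calA$'s schedule and \OPT's is valid.
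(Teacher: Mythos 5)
Your plan matches the paper's proof: an adversary strategy with rounds spanning two time slots, where the in-between step is made irrelevant (both schedulable packets there have equal weight) so the adversary need only react to $\calA$'s choice at the odd steps, committing to the next round's packets just in time for lookahead, and the balancing equation $2R^2 - R - 2 = 0$ is exactly what the paper derives (as $(R+1)(3-2R)=1$). The paper realizes your sketch concretely via an exponentially growing three-packet-per-phase construction whose weights solve a second-order linear recurrence with characteristic roots $R+\tfrac12$ and $R+1$, but the underlying idea and resulting bound are the same.
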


\begin{proof}
Fix some online algorithm $\calA$ and some $\varepsilon > 0$. We will
show that, for some sufficiently large integer $n$ and sufficiently
small $\delta > 0$, there is a $2$-bounded instance of {\BDPS} with
1-lookahead, parametrized by $n$ and $\delta$, for which the optimal
profit is at least $(R-\varepsilon)$ times the profit of $\calA$.

Our instance will consist of phases $0,\ldots,k$, for some $k\le
n$. In each phase $i<n$ we will release three packets whose weights
will grow roughly exponentially from one phase to next. The number $k$
of phases is determined by the adversary based on the behavior of
$\calA$.

The adversary strategy is as follows. We start with phase $0$. 
Suppose that some phase $i$, where $0\le i < n$,
has been reached. In phase $i$ the adversary releases the following three packets:
\begin{compactitem}
\item A packet $a_i$ with weight $w_{i}$, release time $2i+1$ and deadline $2i+1$, i.e., a tight packet.
\item A packet $b_i$ with weight $w_{i+1}$, release time $2i+1$ and deadline $2i+2$.
\item A packet $c_i$ with weight $w_{i+1}$, release time $2i+2$ and deadline $2i+3$.
\end{compactitem}
(The weights $w_i$ will be specified later.)
Now, if $\calA$ schedules an expiring packet in step $2i+1$ (a tight packet $a_i$ or $c_{i-1}$, which may be pending from the previous phase), 
then the game continues; the adversary will proceed to phase $i+1$. Otherwise, 
the algorithm schedules packet $b_i$, in which case the adversary lets $k = i$ and the game ends. 
Note that in step $2i+2$ the algorithm may schedule only $b_i$ or $c_i$, each having weight $w_{i+1}$.
Also, importantly, in step $2i+1$ the algorithm cannot yet see whether the packets from phase $i+1$ will arrive or not.

If phase $i=n$ is reached, then in phase $n$ the adversary releases 
a single packet $a_n$ with weight $w_{n}$ and release time and deadline $2n+1$, 
i.e., a tight packet.

We calculate the ratio between the weight of packets in an optimal schedule and the weight of packets sent by the algorithm.
Let $S_k = \sum_{i=0}^k w_i$.
There are two cases: either $k < n$, or $k = n$.


\smallskip
\noindent
\mycase{1} $k < n$. In all steps $2i+1$ for $i < k$ algorithm $\calA$
scheduled an expiring packet of weight $w_{i}$
and in step $2k+1$ it scheduled packet $b_k$ of weight $w_{k+1}$. In an even step $2i+2$ for $i\leq k$ it scheduled
a packet of weight $w_{i+1}$. Note that there is no packet scheduled in step $2k+3$.
Overall, $\calA$ scheduled packets of total weight $S_{k-1} + w_{k+1} + S_{k+1} - w_0 = 2S_{k+1} - w_k - w_0$.

The adversary schedules packets of weight $w_{i+1}$ in steps $2i+1$ and $2i+2$ for $i < k$ and
all packets from phase $k$ in steps $2k+1$, $2k+2$ and $2k+3$. In total, 
the optimum has a schedule of weight $2S_{k+1} - 2w_0 + w_k$.
The ratio is 
\begin{equation*}
	R_k = \frac{2S_{k+1}+w_k-2w_0}{2S_{k+1} - w_k - w_0}.
\end{equation*}


\smallskip
\noindent
\mycase{2} $k = n$. As before, in all odd steps $2i+1$ for $i < n$
algorithm $\calA$ scheduled an expiring packet of weight $w_{i}$
and in all even steps $2i+2$ for $i<n$ it scheduled a packet of weight $w_{i+1}$. In the last step $2n+1$ it scheduled a packet of weight $w_n$
as there is no other choice. Overall, the total weight of $\calA$'s schedule is $2S_n - w_0$.

The adversary schedules packets of weight $w_{i+1}$ in steps $2i+1$ and $2i+2$ for $i < n$ and a packet of weight $w_{n}$ in the last step $2n+1$
which adds up to $2S_n - 2w_0 + w_n$. The ratio is
\begin{equation*} 
	\hatR_n = \frac{2S_n+w_n-2w_0}{2S_n - w_0}.
\end{equation*} 

We start with an intuitive explanation which leads to the optimal
setting of weights $w_i$ and the ratio $R$ for the instances of the
type described above. We normalize the instances so that $w_0=1$. We
want to set the weights so that $R_k\geq R-\varepsilon$ for all $k\geq
0$ and $\hatR_n\geq R-\varepsilon$. We first find the weights depending
on $\delta$ such that $R_k=R$ for all $k\geq 1$. Using
$w_k=S_k-S_{k-1}$ for $k\geq 1$ and $w_0=1$, the condition $R_k=R$ for
$k\geq 1$ is rewritten as 
\begin{equation}
\label{eq:Rk}
	R=\frac{2S_{k+1} + S_k-S_{k-1}-2}{2S_{k+1} - S_k +S_{k-1}-1}\,,
\end{equation}
or equivalently as
\begin{equation}
\label{eq:recurrence}
	(2R-2)S_{k+1} -(R+1)S_k + (R+1)S_{k-1} = -(2-R)\,.
\end{equation}
A general solution of this linear recurrence with $S_0=w_0=1$ and a
parameter $\delta$ is
\begin{equation}
\label{eq:solution}
S_k=(\gamma+1)\alpha^k+\delta(\beta^k-\alpha^k)-\gamma\,,
\end{equation}
where $\alpha<\beta$ are the two roots of the characteristic
polynomial of the recurrence $(2R-2)x^2-(R+1)x+(R+1)$ and
$\gamma=(2-R)/(2R-2)$. To justify~(\ref{eq:solution}), a general
solution is $A\alpha^k+B\beta^k-\gamma$ for parameters $A$ and $B$ and
a suitable constant $\gamma$. Considering $A=B=0$, the value
$\gamma=(2-R)/(2R-2)$ follows. Considering the constraint $S_0=1$, we
obtain $A+B=\gamma+1$; our parametrization by $\delta$
in~(\ref{eq:solution}) is equivalent but more convenient for further
analysis.

In our case of $R=\onefourth(1+\sqrt{17})$ a calculation gives
\begin{equation}
\label{eq:parameters}
\alpha = R+\onehalf =\onefourth (3 + \sqrt{17})\,,\quad
\beta = R+1 = \onefourth ( 5 + \sqrt{17})\, \mbox{ and }\quad
\gamma = R = \onefourth(1+\sqrt{17})\,.
\end{equation}
A calculation shows that for $\delta=0$, the solution satisfies
$R_0=R$. We choose a solution with a sufficiently small $\delta>0$
which guarantees $R_0\geq R-\varepsilon$. Since $1<\alpha<\beta$, for
large $n$, the dominating term in $S_n$ is $\delta\beta^n$. Thus
\begin{equation} 
\label{eq:Rn}
	\lim_{n\rightarrow\infty}\hatR_n = 
	\lim_{n\rightarrow\infty}
\frac{2S_n+S_n-S_{n-1}}{2S_n} =
	\lim_{n\rightarrow\infty}
\frac{3\delta\beta^n-\beta^{n-1}}{2\delta\beta^n}=\frac{3\beta-1}{2\beta}=R\,.
\end{equation} 
The last equality is verified by a direct calculation; actually it is
the equation that defines the optimal $R$ for our construction (if
$\beta$ as the root of the characteristic polynomial of the recurrence
is expressed in terms of $R$). 

\medskip

For a formal proof, we set $w_0=1$ and for $i=1,2,\ldots$,
\[
w_i=(\gamma+1)\alpha^{k-1}(\alpha-1)+\delta(\beta^{k-1}(\beta-1)-\alpha^{k-1}(\alpha-1))\,,
\]
where the parameters $\alpha$, $\beta$ and $\gamma$ are given by
(\ref{eq:parameters}) and $\delta>0$ is sufficiently small. By a
routine calculation we verify (\ref{eq:solution}) and
(\ref{eq:recurrence}). Thus $R_k=R$ for $k\geq 1$.
For $R_0$, we first verify that $\delta=0$ would yield $w_1=\alpha$ and
$R_0=R$. By continuity of the dependence of $w_1$ and $R_0$ on
$\delta$, for a sufficiently small $\delta>0$, we have $R_0\geq
R-\varepsilon$; fix such a $\delta>0$. Now, for $n\rightarrow\infty$,
$S_n=\delta\beta^n+O(\alpha^n)=\delta\beta^n(1+o(1))$. Thus, the
calculation (\ref{eq:Rn}) gives $\lim_{n\rightarrow\infty}\hatR_n =
R$.  Consequently, $\hatR_n\geq R-\varepsilon$ for a sufficiently large $n$
of our choice. This defines the required instance and
completes the proof.
\end{proof}


\bibliographystyle{plainurl}
\bibliography{../packets}

\end{document}